\title{Test Martingales for bounded random variables}
\author{Harrie Hendriks\\
Radboud University Nijmegen}
\date{}
\newif\ifappendix\appendixfalse\appendixtrue
\newtheorem{theorem}{Theorem}
\newtheorem{corollary}[theorem]{Corollary}
\newtheorem{lemma}[theorem]{Lemma}
\newtheorem{remark}{Remark}
\newenvironment{proof}[1][Proof]{\noindent\textbf{#1}\quad}{\ \rule{0.5em}{0.5em}}
\def\text{\mbox}
\def\R{{\mathbb R}}
\def\P{{\mathbb P}}
\def\E{{\bf E}}
\def\Var{{\rm Var}}
\def\KL{{\rm KL}}
\def\Alt{{\mathcal B}}
\begin{document}

\maketitle

\begin{abstract}
Given a positive random variable $X$, $X\ge0$ a.s., 
a null hypothesis $H_0:\E(X)\le\mu$ and a random sample of infinite size of $X$,
we construct test supermartingales for $H_0$, 
i.e. positive processes that are supermartingale if the null hypothesis
is satisfied.
We test hypothesis $H_0$ by testing the supermartingale hypothesis 
on a test supermartingale.
We construct test supermartingales that lead to tests with power 1. 
We derive confidence lower bounds.
For bounded random variables we extend the techniques to two-sided tests of $H_0:\E(X)=\mu$ and to the construction of confidence intervals.
\\
In financial auditing random sampling is proposed as one of the possible techniques to gather
enough evidence to justify rejection of the null hypothesis that there is a 'material' misstatement in a financial report.
The goal of our work is to provide a mathematical context 
that could represent such process
of gathering evidence by means of repeated random sampling, while ensuring an intended significance level.
\end{abstract}

\emph{Mathematics Subject Classification:} Primary 62L12; Secondary 60G42, 62G10, 62G15

\emph{Keywords:} Sequential hypothesis test, maximal lemma, hypothesis on mean, 
nonparametric test, first passage time, Wald's equation,  
confidence lower bound, confidence interval, audit sampling, acceptance sampling,
Hoeffding's inequality.

\section{Introduction}\label{Introduction}
We are inspired by Gr\"unwald \cite{PG} and Shafer et al \cite{SSVV}
who pointed out the relationship between 
sequential probability ratio tests (\cite{W}) and martingale theory.
In \cite{PG} the test supermartingale concept is explicitly announced
as a contribution to the current discussion about 
the interpretation of \emph{p-value} in scientific literature.
Further developments along this line can be found in \cite{GHK}.
In terms of the gambling metaphor the $p$-value is replaced with the fortune of a gambler who starts with unit fortune,
and plays a sequence of games of chance which are fair or on average loss-making
if the null hypothesis is satified.
The cited works mainly describe tests concerning the parameters in a parametrized family of 
probability distributions. 
We will describe tests concerning the expected value of a random variable,
under the only assumption that the random variable 
is non-negative and its expectation exists (and is finite).
We are interested in the performance of these tests depending on the expectation 
and the variance of the random variable.
\\
Waudby-Smith and Ramdas in \cite{W-SR} constructed confidence intervals and time-uniform
confidence sequences for the mean of a bounded random variable
using the same test supermartingale technique as ours developed in Section \ref{H0}.
They observed that Stark \cite{S}, based on an idea of Kaplan (see \cite{HK}), has developed this test supermartingale technique, as well
as its integrated version and one based on sampling without replacement.
\\
\\
The novelty of this paper is the study of the performance of the test supermartingales defined in Section \ref{H0} depending on the expecation and the variance of a random variable not satisfying the null hypothesis (Section \ref{Behavior})
and an application with financial audit sampling in mind (Section \ref{Bounded case}).
In this paper we hope to reach not only statisticians with a reasonable background in probability,
but also applied statisticians. 
That is why we will explain some notions from probability.
We will say that some event is \emph{almost sure}, or a.s., 
if its probability is 1 with respect to the relevant
probability distribution(s). 
The term random variable may be abbreviated to rv. 
A random variable $X$ is \emph{integrable} if its expected value $\E(X)$ exists and is finite
and it will be called \emph{positive} if $X\ge0$ a.s..
A sequence of rv's $\{X_k\}_{k=1}^\infty$, 
$X_1,X_2,X_3,\ldots$,
is a random sample or an iid (independent identically distributed) sample of $X$ 
if it is a collection of independent rv's 
such that each $X_k$ has the same probability distribution as $X$.
\\
\\
\textbf{Example:} In the context of financial auditing,
we have in mind that $X$ is defined on some population $\Omega$, 
say a finite set $\Omega=\{\omega_1,\ldots,\omega_L\}$, in the sense that given $\omega\in \Omega$ 
there is a well defined procedure to determine its value $X(\omega)\in\R$.
The auditor has to assure himself that $\Omega$ is well defined and that the procedure to determine
an $X$-value is practically feasible.
The randomness of $X$ is introduced by the auditor 
who has a procedure to select randomly an item $\omega$ in $\Omega$, 
such that any $\omega$ has probability $p(\omega)$ to be selected, 
where $p(\omega)\ge0$ and $\sum_{\omega\in\Omega} p(\omega)=1$.
One is interested in a characteristic of $X$ that can be interpreted as 
its expected value $\E(X)=\sum_{\omega\in\Omega}X(\omega)p(\omega)$ with respect to $p$.
\\
For example $\omega_1,\ldots,\omega_L$ are identifiers of items underlying a financial report.
For $\omega\in\Omega$ one has its book value $B(\omega)>0$, the audited value $A(\omega)$
and the so-called taint or tainting $T(\omega)=(B(\omega)-A(\omega))/B(\omega)$.
In this context we will assume that $0\le A(\omega)$ 
so that $T(\omega)\le1$,
and usually one assumes that also $A(\omega)\le B(\omega)$, so that $T(\omega)\ge0$.
The total book value is 
$B_{\hbox{\scriptsize tot}}=\sum_{\omega\in\Omega}B(\omega)$. 
The auditor is interested in the total misstatement 
$$\sum_{\omega\in\Omega}B(\omega)-A(\omega)=B_{\hbox{\scriptsize tot}}\sum_{\omega\in\Omega}T(\omega)B(\omega)/B_{\hbox{\scriptsize tot}}=B_{\hbox{\scriptsize tot}}\sum_{\omega\in\Omega}T(\omega)p(\omega)
=B_{\hbox{\scriptsize tot}}\E(T),$$
where $p(\omega)=B(\omega)/B_{\hbox{\scriptsize tot}}$ (Probability Proportional to Size) satisfies the properties of a probability density. 
Clearly we do not offer a way out of handling missing items in the report (or items with book value 0).
\\
The problem is that $\Omega$ is a large set, 
so that it is not practical to determine all $T$-values.
Often it will be sufficient to determine an upper bound for $\E(T)$ (or, equivalently, a lower bound for $\E(1-T)$), 
based on an iid sample of $T$.  
Sampling may be carried out as follows.
Given a number $z\in[0,1]$ one may associate to it that item $w(z)=\omega_\ell\in \Omega$ such that
$\sum_{i=1}^{\ell-1}p(\omega_i)<z\le\sum_{i=1}^{\ell}p(\omega_i)$.
A random sample 
$T_1,T_2,\ldots$ of $T$ can be constructed with the help of a random number generator 
yielding a random sample of
numbers $z_1,z_2,\ldots$, uniformly distributed in [0,1], by 
associating to it $T_1=T(w(z_1)), T_2=T(w(z_2)),\ldots$.
Notice that all $\omega\in\Omega$ will occur (almost surely) infinitely often
in the sequence $w(z_1),w(z_2),\ldots$.
\\
\\
A  process (in discrete time) $\{Y_k\}_{k=0}^\infty$ is a sequence of 
rv's $Y_0,Y_1,Y_2,\ldots$.
The index $k$ is referred to as time, and we will speak about time $k$.
The notion of a filtration $\{\mathcal F_k\}_{k=0}^\infty$ is used to formalize the development in time of the state of the investigator.
More explicitly $\mathcal F_k$ is a $\sigma$-algebra which represents all the information that is available 
up to and including time $k$.
The process $\{Y_k\}_{k=0}^\infty$ is adapted to $\{\mathcal F_k\}_{k=0}^\infty$
if the variables $Y_0,Y_1,\ldots,Y_k$ are $\mathcal F_k$-measurable, that is, their values are measurable 
at time $k$.
The process is \emph{integrable} (resp. \emph{positive}) if each rv $Y_k$ is integrable (resp. positive).
Assume the integrable process $\{Y_k\}_{k=0}^\infty$ is adapted to $\{\mathcal F_k\}_{k=0}^\infty$.
Given $\ell$ and $k$ the conditional expectation $\E(Y_\ell\mid\mathcal F_k)$
is a rv that is $\mathcal F_k$-measurable. 
For each possible realization of observable values at time $k$
is associated a value of $\E(Y_\ell\mid\mathcal F_k)$, 
representing the expected value of $Y_\ell$, given that realization.
It holds that
$\E(Y_\ell\mid\mathcal F_k)=Y_\ell$ for $\ell\le k$.
The conditional expectation of rv $Y_\ell$ with respect to the trivial $\sigma$-algebra (no information)
corresponds to the ordinary notion of expectation.
Thus $\E(Y_\ell)=\E(\E(Y_\ell\mid\mathcal F_k))$.
\\
Given $0\le\nu\le1$, by $\Alt(\nu)$  we
denote the Bernoulli distribution with values 0 and 1 and expectation $\nu$, 
in particular the probability of 1 (resp. 0) is $\nu$ (resp. $(1-\nu)$).
\\
\\
In section \ref{Test martingales} we state the so-called maximal lemma 
and show how it leads to a
test that a random process is a supermartingale.
Suppose given a null hypothesis $H_0:\E(X)\le\mu$ about an integrable rv $X$ such that $X\ge0$ a.s..
In section \ref{H0}, 
given a
random sample $\{X_k\}_{k=1}^\infty$ of $X$, we develop a method to construct a process $\{M_k\}_{k=0}^\infty$
which is a positive supermartingale if $X$ satisfies $H_0$.
Such a process is called a test supermartingale for $H_0$.
We indicate how to handle samples without replacement and stratified samples. 
In section \ref{SPRT} we give a few examples of alternative test supermartingales.
Moreover we give a supermartingale approach to a theorem of Hoeffding.
In section \ref{Behavior} we study the behavior of test supermartingales, 
as depending on $X$ not satisfying the null hypothesis, 
that is $\E(X)>\mu$.
In section \ref{Bounded case} we go in detail to the case relevant in financial auditing,
as expounded on above,
and indicate its relation to current practice.
In section \ref{conf reg} we apply the technique to the construction of confidence upper bounds
and confidence intervals.

\section{Supermartingales}\label{Test martingales}

Suppose given an integrable process $\{Y_k\}_{k=0}^\infty$, 
adapted to the filtration $\{\mathcal F_k\}_k$.
Recall the $\sigma$-algebra $\mathcal F_k$ represents the information available at time $k$, including
the values of $Y_i$ for $i\le k$.
The process $\{Y_k\}_k$ is a \emph{supermartingale} 
if $Y_k\ge\E[Y_\ell\mid \mathcal F_k]$ for $0\le k < \ell<\infty$.
It is called a \emph{martingale} if $Y_k=\E[Y_\ell\mid \mathcal F_k]$ for $0\le k < \ell<\infty$.
%
%
Suppose $\{Y_k\}_k$ is a positive supermartingale.
The 'maximal lemma' (\cite[Ch. V, Thm. 20]{DM}, cf. Ville's gambler's ruin theorem \cite[Thm. 1 p.84]{Ville})
implies that
$$
\forall\lambda\ge0:\lambda\,\P\{\sup_kY_k\ge\lambda\}\le\E[Y_0].
$$
We apply this result in the following practical, but actually equivalent, form: 

\begin{lemma}\label{maximal lemma}
If $\{Y_k\}_k$ is a positive supermartingale,
 then
$$\forall\lambda\ge0:\lambda\,\P\{\exists \ell: Y_\ell\ge\lambda\}\le\E[Y_0].$$
\end{lemma}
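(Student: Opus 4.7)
The plan is to derive this statement directly from the maximal lemma for positive supermartingales that the author has just cited in its supremum form, namely $\lambda\,\P\{\sup_k Y_k\ge\lambda\}\le\E[Y_0]$. The key observation is that the event in the lemma is contained in the event appearing in the cited maximal lemma.

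First I would note the set-theoretic inclusion
\[
\{\omega:\exists \ell\ \ Y_\ell(\omega)\ge\lambda\}
=\bigcup_{\ell\ge0}\{Y_\ell\ge\lambda\}
\subseteq\{\sup_kY_k\ge\lambda\},
\]
since whenever some $Y_\ell(\omega)$ attains or exceeds $\lambda$, the supremum certainly does. Taking probabilities preserves the inequality, and multiplying by the non-negative constant $\lambda$ gives
\[
\lambda\,\P\{\exists\ell:Y_\ell\ge\lambda\}\le\lambda\,\P\{\sup_kY_k\ge\lambda\}\le\E[Y_0],
\]
where the last step is the cited maximal lemma. The case $\lambda=0$ is trivial.

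For the author's parenthetical claim that the two forms are in fact equivalent, the reverse direction requires an approximation argument: one writes $\{\sup_kY_k\ge\lambda\}=\bigcap_{n\ge1}\bigcup_k\{Y_k>\lambda-1/n\}\subseteq\bigcap_{n\ge1}\{\exists\ell:Y_\ell\ge\lambda-1/n\}$, applies the existential form at level $\lambda-1/n$, and lets $n\to\infty$ using monotone continuity of probability. I would not need to carry this out to prove the lemma as stated, but it justifies the word ``equivalent''.

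There is no real obstacle here; the only point that demands any care is distinguishing $\{\sup_kY_k\ge\lambda\}$ from $\{\exists\ell:Y_\ell\ge\lambda\}$ (the supremum could be attained only in the limit), which is precisely why the inclusion, rather than equality, is the cleanest route.
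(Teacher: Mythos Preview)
Your argument is correct: the inclusion $\{\exists\ell:Y_\ell\ge\lambda\}\subseteq\{\sup_kY_k\ge\lambda\}$ together with the cited supremum form of the maximal lemma immediately yields the statement, and your sketch of the reverse implication via levels $\lambda-1/n$ is the standard way to see equivalence.

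The paper, however, does not appeal to the cited supremum form at all; it gives a self-contained proof. It introduces the stopping time $N=\inf\{k:Y_k\ge\lambda\}$, uses that the stopped process $\{Y_{k\wedge N}\}_k$ is again a positive supermartingale, applies Markov's inequality to $Y_{\ell\wedge N}$ to get $\E[Y_0]\ge\E[Y_{\ell\wedge N}]\ge\lambda\,\P\{Y_{\ell\wedge N}\ge\lambda\}=\lambda\,\P\{N\le\ell\}$, and then lets $\ell\to\infty$ using continuity of probability along the increasing sequence $\{N\le\ell\}\uparrow\{N<\infty\}=\{\exists\ell:Y_\ell\ge\lambda\}$. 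Your route is shorter and perfectly legitimate given that the supremum form has just been quoted with references; the paper's route has the advantage of being elementary and not depending on that external citation, effectively re-proving the maximal inequality in the form actually needed.
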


\begin{proof}
Consider the random variable $N$ which is 
the first time $k$ that the process $\{Y_k\}_k$ reaches or exceeds level $\lambda$,
or $\infty$ if the process does not exceed level $\lambda$.
$N$ is a stopping time.
If one stops the supermartingale $\{Y_k\}_k$ at that time, 
the stopped process $\{Y_{k\wedge N}\}_k$ is still a supermartingale (\cite[Thm. 5.2.6]{D}).
Thus for any $\ell$  we have $\E[Y_0]=\E[Y_{0\wedge N}]\ge\E[Y_{\ell\wedge N}]$ while
for the positive random variable $Y_{\ell\wedge N}$ we have
$\E[Y_{\ell\wedge N}]\ge\lambda\P\{Y_{\ell\wedge N}\ge\lambda\}$. 
The lemma then follows since the events 
$\{Y_{\ell\wedge N}\ge\lambda\}=\{N\le\ell\}$ form an increasing sequence for increasing $\ell$
whose union is $\{N<\infty\}=\{\exists \ell:Y_\ell\ge\lambda\}$.
\end{proof}
\\
\\
\noindent
We follow \cite{PG} and \cite{SSVV} where 
the significance of the above ideas 
for
statistical hypothesis testing is worked out. 
Be given a statistical hypothesis $H_0$.
A \emph{test (super)martingale} (for $H_0$) is a process $\{Y_k\}_{k=0}^\infty$
such that, if $H_0$ is satisfied, 
$\{Y_k\}_k$ is a positive (super)martingale and $\E[Y_0]>0$.
Be given a significance level $\alpha$, $0<\alpha\le1$.
A practical test consists of observing sequentially $Y_k$, $k=1,2,\ldots$
and stop at time $n$ if $Y_n\ge\E(Y_0)/\alpha$ or stop at some other time $n$.
In the first case $Y_n\ge\E(Y_0)/\alpha$ and one may reject $H_0$, otherwise one cannot reject $H_0$.
The size of such test is the probability to reject $H_0$ under the assumption that $H_0$ is satisfied.
By Lemma \ref{maximal lemma} it is at most $\alpha$:
$$
\P\{\hbox{Reject }H_0\}\le\P\{\exists n:Y_n\ge\E(Y_0)/\alpha\}
\le(\E(Y_0)/\alpha)^{-1}\E(Y_0)=\alpha.
$$
To show the sharpness of Lemma \ref{maximal lemma}
we give the example of a classical test about the probability of success of a Bernoulli variable.
The basic idea behind the example is the following.
Let $H_0$ be a simple
null hypothesis about a rv $Z$ 
and let $R$ be the critical region of
a test of $H_0$ based on a random sample $Z_1,Z_2,\ldots$ of $Z$.
Then 
the conditional probabilities
$Y_i=\P(R\mid Z_1,\ldots,Z_i)$ of $R$ under hypothesis $H_0$, constitute a test martingale.

\begin{remark}\label{sharpness1/2}
Let $Z$ be a Bernoulli variable $Z\sim\Alt(\nu)$, $\nu$ unknown,
and consider null hypothesis $H_0:\E(Z)\ge\mu$,
that we want to test against $\E(Z)<\mu$.
Take the test to reject $H_0$ if in a sample of fixed size $n$ at most $k$ successes are found.
It will have significance level $\alpha=F(k;n,\mu)$, 
where $F(k;n,\mu)$ denotes the cumulative distribution function 
of the binomial distribution with parameters $(n,\mu)$ evaluated at $k$.
\\
We construct a test supermartingale $\{Y_i\}_i$ as follows.
Given a random sample  $Z_1,Z_2,\ldots$ of $Z$, let $S_i=Z_1+\cdots+Z_i$
%
and let 
$$
Y_0=F(k;n,\mu);\quad 
Y_i=F(k-S_i;n-i,\mu)\quad(1\le i\le n);\quad 
Y_i=Y_n\quad(i>n).$$
If 
$\nu=\E(Z)\ge\mu$, 
this process is a supermartingale because of a well-known recursion formula for
binomial distribution functions:  
\begin{align*}
\E(Y_i\mid S_{i-1})
&=\E(F(k-S_i;n-i,\mu)\mid S_{i-1})
\\&=
\nu F(k-(S_{i-1}+1);n-i,\mu)+(1-\nu)F(k-S_{i-1};n-i,\mu)
\\&\le
\mu F(k-(S_{i-1}+1);n-i,\mu)+(1-\mu)F(k-S_{i-1};n-i,\mu)
\\&=F(k-S_{i-i};n-(i-1),\mu)=Y_{i-1}.
\end{align*}
Notice that $[S_n\le k]\Leftrightarrow [Y_n=1]$ and $[S_n> k]\Leftrightarrow [Y_n=0]$.
If $\nu=\mu$ then $H_0$ holds and 
$\alpha=F(k;n,\mu)=\P\{Y_n=1\}\le\P\{\exists\ell:Y_\ell\ge1\}\le\E(Y_0)=\alpha$.
\end{remark}

\section{Test supermartingales for positive random variables}\label{H0}

Given $\mu>0$, we will construct test supermartingales to test null hypothesis $\E(X)\le\mu$
for a positive integrable rv $X$.
First we find functions $f:\R\to\R$ satisfying
\begin{equation}\label{testcond}
[X\ge 0\hbox{ a.s. \& }\E(X)\le\mu]\Rightarrow [f(X)\ge0\hbox{ a.s. \& }\E(f(X))\le 1].
\end{equation}
Applied to the case of two-point distributions 
let $0\le x<\mu<y$ and consider $X$ such that 
$\P\{X=x\}=(y-\mu)/(y-x)$, $\P\{X=y\}=(\mu-x)/(y-x)$, so that
$\E(X)=\mu$.
We need $f(x)(y-\mu)/(y-x)+f(y)(\mu-x)/(y-x)\le1$, or equivalently
$(1-f(x))/(\mu-x)\ge (f(y)-1)/(y-\mu)$.
This must hold for all $x,y$ with $0\le x<\mu<y$. 
Thus there is $b$ such that
$(1-f(x))/(\mu-x)\ge b\ge (f(y)-1)/(y-\mu)$ for all $x,y$ with $0\le x<\mu<y$.
It follows that $f(t)\le 1+b(t-\mu)=(1-b\mu)+bt$ for all $t\ge0$.
In order that $f(t)\ge0$ for all $t\ge0$ we need
$ 1-b\mu\ge0$ and $b\ge0$, that is $0\le b\mu\le1$.
%
Thus we propose the following functions: for $0\le c\le 1$
\begin{equation}\label{factor}
f(t)=(1-c)+c\cdot\frac{t}{\mu}.
\end{equation}
We would like to draw attention to its relation to likelihood ratios.
Suppose $0<c<1$, take $y>\mu$ and choose $\theta$ such that 
$c=(\theta-\mu)/(y-\mu)$.
Consider the two-point distributions $\ell_\mu$ and $\ell_\theta$ with support $\{0,y\}$ 
defined as
$\ell_\nu(y)=\nu/y$, $\ell_\nu(0)=1-\nu/y$ for $\nu=\mu$ resp. $\nu=\theta$.
Then $f(y)=(1-c)+c\,y/\mu=\theta/\mu=\ell_\theta(y)/\ell_\mu(y)$
and
$f(0)=1-c=(y-\theta)/(y-\mu)=\ell_\theta(0)/\ell_\mu(0)$.
Thus $f$ is the affine extrapolation of the likelihood ratio
$\ell_\theta/\ell_\mu$ at the points 0 and $y$.
\\
\\
See Section \ref{SPRT} for some alternatives to the condition $X\ge0$ a.s. in (\ref{testcond}).
%
\\
\\
Suppose $X$ is an integrable random variable 
such that $X\ge0$ a.s.. 
Consider the null hypothesis
$$H_0:\E(X)\le\mu.$$
We will construct 
test supermartingales for $H_0$ using factors modelled after (\ref{factor}).
Consider a random sample $X_1,X_2,\ldots$ of the random variable $X$.
It defines a filtration $\{\mathcal F_k\}_k$ by $\sigma$-algebras 
$\mathcal F_k=\sigma(X_1,\ldots,X_k)$ for $k\ge1$ and 
the trivial $\sigma$-algebra $\mathcal F_0$.
In particular $\E(X_k\mid\mathcal F_{k-1})=\E(X_k)=\E(X)$.
We let $M_0=1$. 
At time $(k-1)$ 
the variables $X_1,\ldots,X_{k-1}$ and $M_{k-1}$ are observed, having values $x_1,\ldots,x_{k-1}$ and $m_{k-1}$.
One chooses \emph{test parameter} $c_{k}\in[0,1]$ which should not depend on $X_k,X_{k+1},\ldots$ in any conceivable way
(this makes the process of test parameters $\{c_{k}\}_k$ `predictable', i.e. for all $k$, $c_{k}$ is $\mathcal F_{k-1}$-measurable).
and 
one defines 
\begin{equation}\label{test martingale}
M_k=m_{k-1}\cdot\left((1-c_k)+c_k\frac{X_k}{\mu}\right).
\end{equation}
For our choice of filtration the functional dependence of $c_{k}$ on the observations of $X_1,\ldots,X_{k-1}$ 
should have been
fixed before any observation was available.
But see Remark \ref{filtration} for a broader, more practical class of filtrations.
The process $\{M_k\}_k$ is a test supermartingale for $H_0:\E(X)\le\mu$.
We obtain a test with significance level $\alpha$ if we reject $H_0$ at a time $n$ with 
$\max\{M_k\mid k\le n\}\ge1/\alpha$.
\\
\\
As in \cite{PG} (and \cite{W-SR}) we will express the above construction in a \emph{gambling metaphor}, 
that we present as a
'martingale transform' of a supermartingale by a positive predictable process (see e.g. \cite[Thm. 5.2.5]{D}, \cite[Section 10.6]{Wpwm}).
We consider the hypothesis $H_0:\E(X)\le\mu$ that we would like to reject.
Based on the random sample $X_1,X_2,\ldots$ of $X$, 
consider the process $Y=\{Y_k\}_k$ with $Y_0=0$ and 
$Y_k=Y_{k-1}-1+X_k/\mu$.
Under $H_0$ the process $Y$ is a supermartingale 
with respect to the filtration $\{\mathcal F_k=\sigma(X_1,\ldots,X_k)\}_k$.
Consider a lottery that takes place at time $k$ and pays out $X_k/\mu$
per unit stake, so that the net gain per unit stake is 
$-1+X_k/\mu=Y_k-Y_{k-1}$. 
We start with an initial unit amount of capital $m_0=1$.
At time $k-1$ we have accumulated a capital of $m_{k-1}$ and
we decide to stake an amount of $H_k=c_{k}m_{k-1}$ in this lottery, $0\le H_k\le m_{k-1}$.
Then at time $k$ our capital will become 
$m_{k-1}+H_k(Y_k-Y_{k-1})=m_{k-1}+c_km_{k-1}(-1+X_k/\mu)=M_k$.
If $X$ satisfies hypothesis $H_0$, we have a fair or loss-making game.
In particular, if we succeed in ending up with a large gain, we have reason to assert that $\E(X)>\mu$.
With this metaphor it should be intuitively clear, that 
it is wrong to change the stake amount $c_{k}m_{k-1}$, after having observed $X_k$.
Moreover, if one wants to switch over to a new gambling game to reject $H_0$, 
one has to continue with the capital left after the preceding game.
%
\subsubsection*{Choosing the test parameters, integrated test supermartingales}
A plausible choice for $c_{k}$ in the construction 
(\ref{test martingale}) of a test supermartingale
is that value of $c$ that maximizes
$m_{k-1}(c)=\prod_{i=1}^{k-1}(1-c+c\,x_i/\mu)$, where some prudence is necessary to avoid $c_{k}=1$.
We will consider a different proposal.
Start with some probability density $\pi$ on $[0,1]$, 
typically the uniform probability density on the interval $[c_0,1]$ for some $c_0\ge0$.
Define $\{M_{k}(c)\}_k$ to be the test supermartingale based on the choice $c_{k}=c$, all $k$,
and consider the \emph{integrated test supermartingale} with respect to \emph{test measure} $\pi$:
\begin{align}
M_k(\pi)&=\int_0^1M_k(c)\pi(c)dc=\int_0^1M_{k-1}(c)\cdot(1-c+c\,X_k/\mu)\pi(c)dc
\label{its}
\\&=M_{k-1}(\pi)\cdot(1-c_{k}+c_{k}X_k/\mu), \hbox{ where } \nonumber
\\
c_{k}&=\int_0^1 c\cdot M_{k-1}(c)\pi(c)dc/M_{k-1}(\pi).  \nonumber 
\end{align}
Notice that $c_{k}$ is the expectation 
of the probability density $p_{k-1}$ defined by $p_{k-1}(c)=M_{k-1}(c)\pi(c)/M_{k-1}(\pi)$, 
and that $M_{k-1}(c)$ is a log-concave function in $c$ (see Remark \ref{L1}).
In case $\pi$ is the uniform probability distribution on [0,1],
for large $k$, density $p
_{k-1}$ will be concentrated around the value of $c$ for which
$M_{k-1}(c)$ is largest.
\\
In a numerical implementation one may choose a finite number of points
$b_1,\ldots,b_r\in[0,1]$ and assign to them probabilities $\pi(b_i)=\pi_i$ where
$\pi_i\ge0$ and $\pi_1+\cdots+\pi_r=1$ and consider
$M_k(\pi)=\sum_{i=1}^r M_k(b_i)\pi(b_i)$.
\\
See \cite{W-SR} where the above ideas are worked out in much detail in the context of
constructing confidence bounds.

\subsubsection*{Test supermartingales for the null hypothesis $H_0:\E(X)\ge\mu$ or $H_0:\E(X)=\mu$}
Given $X\le\tau$ a.s. and $\mu<\tau$, we obtain test supermartingales for the
null hypothesis $H_0:\E(X)\ge\mu$ by transforming it into $H_0:\E(\tau-X)\le\tau-\mu$
leading to multiplication factors
\begin{equation}\label{lower bound}
1-c_{k}+c_{k}\,\frac{\tau-X_k}{\tau-\mu},~k=1,2,\ldots,
\end{equation}
where $c_k$ is $\mathcal F_{k-1}$ measurable and $0\le c_{k}\le1$.
\\
\\
If $0\le X\le\tau$ a.s., one can combine test supermartingales $\{M_n^+\}_n$ for $H_0:\E(X)\ge\mu$ and 
 $\{M_n^-\}_n$ for $H_0:\E(X)\le\mu$, based on the same data,
by taking $\{\rho^+ M_n^++\rho^-M_n^-\}_n$ for any $\rho^+,\rho^-\ge0$ with $\rho^++\rho^-=1$.
This will be a test supermartingale for $H_0:\E(X)=\mu$.

\subsubsection*{Filtrations}
\begin{remark}\label{filtration}
In practice the observation of a rv $X_k$ is accompanied by some, possibly random, attributes
like the time and the monetary cost needed to determine the value of $X_k$.
In the financial auditing example as described in the Introduction one could also think of the 
book item 
and the book and audit value associated with the observation of the tainting.
In particular the actual filtration that one would like to adopt is much richer than
$\{\mathcal F_k^o=\sigma(X_1,\ldots, X_k)\}_k$, 
and may include these attributes,
as well as for example the mental condition of the investigator.
\end{remark}
In order to stay close to the intuition for a random, iid, sample $X_1,X_2,\ldots$, 
a suitable extra condition
on the sample is that $X_k,X_{k+1},\ldots$ and their attributes are independent 
of all information contained in $\mathcal F_{k-1}$.
One can reach this by actually \emph{hiding} previous to time $k$ all information about the rv's $X_\ell$ and their attributes for $\ell\ge k$
until it is decided to determine and process the value of $X_k$. 
On the other hand, a richer filtration might typically allow
test parameters $c_{k}$, depending not only on the observed values of $X_1,\ldots,X_{k-1}$,
but for example also on the built-up insights of the investigator up to time $k-1$.
\\
Inspired by \cite{W-SR}, as an illustration we present the example of 
\emph{sampling without replacement}.
We consider the situation of a filtration 
$\{\mathcal F_k\}_{k=0}^\infty$
and a sequence of random variables $\{X_k\}_{k=0}^\infty$ adapted to this filtration, such that the null hypothesis $\E(X)\le\mu$ is equivalent to the sequence of hypotheses
$\E(X_k\mid\mathcal F_{k-1})\le m_k$ where $m_k$ can be determined at time $k-1$,
based on the information available in $\mathcal F_{k-1}$.
In the financial auditing example described in the Introduction one could consider sampling without replacement, items $\omega$ with probability proportional to the book value $B(\omega)$, yielding a random ordering $(\omega_1,\ldots,\omega_L)$ of $\Omega$.
Null hypothesis $\E(T)\ge\mu$ is then equivalent to
$\E(T(\omega_k)\mid\mathcal F_{k-1})\ge m_k$ for $k=1,\ldots,L$ with
$$
m_k
=\frac{\mu B_{\hbox{tot}}-\sum_{i=1}^{k-1}T(\omega_i) B(\omega_i)}
  {B_{\hbox{tot}}-\sum_{i=1}^{k-1}B(\omega_i)} .
$$
If $m_k<0$ at some time $k$, one knows that $H_0$ is satisfied and sampling can be stopped.
One inductively defines a test supermartingale $\{M_k\}_k$ by $M_0=1$ and
$$M_k=M_{k-1}f_k(T_k)\hbox{ with } f_k(t)=(1-c_k)+c_k\cdot\frac{1-t}{1-m_k}$$
where $m_k$ is determined as before and $c_k$ is chosen at time $k-1$.
%
\\
\\
To conclude we sketch a way to handle \emph{stratified sampling}.
Suppose $\Omega_1,\ldots,\Omega_r$ is a partition of sample space $\Omega$.
Let $p_i=\P(\Omega_i)>0$ be known. 
Let $X:\Omega\to\R$ be a random variable such that $X\ge0$ a.s. and
denote by $\nu_i$ the conditional expectation $\nu_i=\E(X\mid\Omega_i)$
so that $\nu=\E(X)=\sum_i \nu_ip_i$.
We wish to test $H_0:\E(X)\le\mu$.
\\
Proposal:
For $m:\{1,\ldots,r\}\to(0,\infty)$, let $M_0^m=1$.
At time $k-1$, when $\omega_1,\ldots,\omega_{k-1}$ have been sampled and $M_{k-1}^m$ is determined, 
one decides from which stratum $\Omega_{j_k}$ to
randomly sample $\omega_k$.
Also at time $k-1$ one chooses $c_k$, such that $0\le c_k<1$.
One then defines
$$M_k^{m}=M_{k-1}^m(1-c_k+c_kX(\omega_k)/m(j_k)).$$
Reject $H_0$ if at some time $k$
it holds that
$M_k^m\ge1/\alpha$
 for all $m$ such that $\sum_im(i)p_i\le\mu$.
For $m$ defined by $m(i)=\nu_i$,
 $\{M_k^m\}_k$ is a martingale so that it exceeds $1/\alpha$ 
with probability at most $\alpha$.
In particular, if $\E(X)=\nu\le\mu$ the probability to reject $H_0$ is at most $\alpha$, 
so that the significance level of the test is at most $\alpha$.
\\
Suppose in each stratum $\Omega_j$
at some time $i\le k$, $\omega_i\in\Omega_j$ has been sampled such that $c_iX(\omega_i)>0$.
Then $\ell(m)=\log(M_k^m)$ has a minimum in the region $\sum_im(i)p_i\le\mu$.
Since $\ell(m)$ is decreasing in each $m(i)$, the minimum lies in the region $\sum_im(i)p_i=\mu$, which is convex.
Since $\ell(m)$ is strictly convex in $m$ this minimum is unique.
\\
It remains an open question how to choose at time $k-1$ stratum $j_k$ and test parameters $c_k$ depending on the ideas 
the investigator has, for example on the conditional expectations $\nu_i$ 
and the conditional variances, or how to optimize these choices based on the sample
$\omega_1,\ldots,\omega_{k-1}$ up to time $k-1$.

\section{Alternative test supermartingales}\label{SPRT}
In this section we present some variants of test supermartingales based on bounds
on the moment generation function of the random variable, 
rather than its positiveness.
Further elaboration of the proposals in this section can be found in \cite{W-SR}.

\begin{remark}\label{subGauss}
Consider the random variable $X$ and let $Z=X-\E(X)$.
Suppose random variable $Z$ is sub-Gaussian
 (see \cite{K}),
meaning that there exists $\tau>0$ such that
$$
\E(\exp(hZ))\le\exp\left(\frac12h^2\tau^2\right)\hbox{ for all }h\in\R.
$$
Suppose $Z$ is sub-Gaussian.
The Gauss deviation $\tau(Z)$ of $Z$ is the minimum $\tau$ for which this inequality holds.
It follows that all moments of $Z$ and $X$ exist and are finite, and that $\Var(X)=\Var(Z)\le\tau(X)^2$.
\\
A normal variable $X$ with standard deviation  $\sigma$
has the property that $Z=X-\E(X)$
is sub-Gaussian with Gauss deviation $\sigma$.
\\
Hoeffding's Lemma \cite[Inequality (4.16)]{H} implies that a random variable $X$ with values in 
an interval $[a,b]$ has the property that $Z=X-\E(X)$
is sub-Gaussian with Gauss deviation at most $\frac12(b-a)$.
\\
For each $h\ge0$ and $\mu$, function
$$f(t)=\frac{\exp(ht)}{\exp(h\mu+\frac12h^2\tau^2)}$$
satisfies the condition:
If $X-\E(X)$ is sub-Gaussian with Gauss deviation at most $\tau$ and 
$\E(X)\le\mu$ then
$f(X)\ge0$ and $\E(f(X))\le1$.
\end{remark}
Test supermartingale factor $f(t)$ as defined in Remark \ref{subGauss} 
corresponds to a likelihood ratio, 
namely of two normal distributions with standard deviation $\tau$:
with $\varphi(t)=(2\pi)^{-1/2}\exp(-t^2/2)$ and  $h=(\nu-\mu)/\tau^2$ we have
$$
f(t)=\frac{\exp(ht)}{\exp(h\mu+\frac12h^2\tau^2)}=\frac{\varphi((t-\nu)/\tau)}{\varphi((t-\mu)/\tau)}.
$$
Its inverse is the point of departure for a likelihood ratio ratio test of the null hypothesis
that $X\sim N(\mu,\tau^2)$ versus alternative $X\sim N(\nu,\tau^2)$.
\\
\\
For rv's with values in $[0,1]$, 
inspired by likelihood ratio tests for Bernoulli distributions, 
one may consider for $0<\mu<1$, $0<\nu<1$
$$
g(t,\nu,\mu)
=
\frac{\nu^t(1-\nu)^{1-t}}{\mu^t(1-\mu)^{1-t}}=
\frac{\exp(ht)}{(1-\mu)+\exp(h)\mu}\hbox{ with }h=\log\left(\frac{\nu(1-\mu)}{\mu(1-\nu)}\right).
$$
One can show that $\E(g(X,\nu,\mu))\le1$ if $0\le X\le1$ a.s. such that $(\E(X)-\mu)(\nu-\mu)\le0$.
Here the corresponding test supermartingale can be expressed in terms of 
densities of Beta distributions. 
Taking $H_0:\E(X)\le\mu$ and $\nu>\mu$,
notice that approach (\ref{factor}) with $c=(\nu-\mu)/(1-\mu)$ corresponds to affine interpolation and
improves on $g(t,\nu,\mu)$ which  is convex in $t$ so that
$$\frac{\nu^t(1-\nu)^{1-t}}{\mu^t(1-\mu)^{1-t}}\le \frac{1-\nu}{1-\mu}(1-t)+\frac\nu\mu t
=(1-c)+c\frac t\mu,
\quad\hbox{ for }0\le t\le1.$$
We make a side-step to a probability upper bound of Hoeffding (\cite{H}).
We reformulate his Theorem 1, Ineq. (2.1), as follows and recast his proof 
in terms of a supermartingale.
\begin{theorem}\label{Hoeffding}
Suppose $X_1,X_2,\ldots,X_n$ iid sample of $X$ with values in $[0,1]$ and $\E(X)=\mu$, $0<\mu<1$.
Let $\overline X=(X_1+\cdots+X_n)$ and $0<\mu<\theta<1$.
Then
$$
\P\{\overline X\ge\theta\}
\le
\left\{
\left(\frac\theta\mu\right)^\theta
\left(\frac{1-\theta}{1-\mu}\right)^{1-\theta}
\right\}^{-n}.
$$
\end{theorem}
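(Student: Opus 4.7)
The plan is to leverage the multiplicative factor $g(t,\theta,\mu)$ already introduced just before the theorem, treating $\theta$ in the role of $\nu$ and building a test martingale out of products of $g(X_i,\theta,\mu)$. Set
\[
M_n \;=\; \prod_{i=1}^n g(X_i,\theta,\mu) \;=\; \left(\frac{\theta}{\mu}\right)^{S_n}\left(\frac{1-\theta}{1-\mu}\right)^{n-S_n},
\]
where $S_n=X_1+\cdots+X_n$. Because the $X_i$ are iid with $\E(X_i)=\mu$, and because we have $\E(g(X,\theta,\mu))\le 1$ whenever $0\le X\le 1$ and $(\E(X)-\mu)(\theta-\mu)\le 0$ (in fact equality holds when $\E(X)=\mu$), the process $\{M_k\}_{k\ge 0}$ with $M_0=1$ is a positive martingale under the assumed distribution of $X$.

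Next I would estimate $M_n$ on the event $\{\overline{X}\ge\theta\}$, i.e.\ $\{S_n\ge n\theta\}$. Since $\theta>\mu$, both ratios $\theta/\mu>1$ and $(1-\mu)/(1-\theta)>1$, so $M_n$ is a strictly increasing function of $S_n$ (each unit increase in $S_n$ multiplies $M_n$ by $\theta(1-\mu)/[\mu(1-\theta)]>1$). Consequently, on $\{S_n\ge n\theta\}$,
\[
M_n \;\ge\; \left(\frac{\theta}{\mu}\right)^{n\theta}\left(\frac{1-\theta}{1-\mu}\right)^{n(1-\theta)} \;=:\; \lambda.
\]

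Finally I would apply Markov's inequality to the positive martingale $M_n$ (equivalently, Lemma \ref{maximal lemma} in its one-step form): since $\E(M_n)=\E(M_0)=1$,
\[
\P\{\overline X\ge\theta\} \;\le\; \P\{M_n\ge\lambda\} \;\le\; \frac{\E(M_n)}{\lambda} \;\le\; \frac{1}{\lambda},
\]
and $1/\lambda$ is exactly $\{(\theta/\mu)^\theta((1-\theta)/(1-\mu))^{1-\theta}\}^{-n}$, as required.

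There is no real obstacle here; the work was done in constructing $g(t,\nu,\mu)$ and its expectation bound. The only step needing a moment's care is verifying the monotonicity of $M_n$ in $S_n$ so that the lower bound $\lambda$ is actually attained on the event of interest, and noting that the argument only uses $\E(M_n)\le 1$, so it works verbatim if $\E(X)\le\mu$ rather than equals $\mu$.
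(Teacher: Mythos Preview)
Your argument is correct and essentially identical to the paper's: your $M_n=\prod_i g(X_i,\theta,\mu)$ coincides with the paper's $M_n(h)$ for $h=\log\bigl(\theta(1-\mu)/(\mu(1-\theta))\bigr)$, and both conclude via Markov's inequality on this process. One small correction: for general $X$ in $[0,1]$ with $\E(X)=\mu$ the process is only a \emph{super}martingale (one has $\E(g(X,\theta,\mu))\le 1$, with strict inequality unless $X$ is Bernoulli), so your parenthetical ``equality holds when $\E(X)=\mu$'' is false---but, as you yourself observe at the end, only $\E(M_n)\le 1$ is needed and your chain of inequalities already uses just that.
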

\begin{proof}
For $h\ge0$, define the supermartingale $\{M_k(h)\}_k$
under $H_0:\E(X)\le\mu$:
$$M_k(h)=\prod_{i=1}^k\frac{\exp(hX_i)}{(1-\mu)+\mu\exp(h)}=\frac{\exp(h(X_1+\cdots+X_k))}{((1-\mu)+\mu\exp(h))^{k}}.$$
Let $h=\log(\theta(1-\mu)/(\mu(1-\theta)))$. Then $h>0$.
The event that at time $n$ we have $\overline X\ge\theta$ equals the event that
$$
M_n(h)
\ge
\frac{\exp(nh\theta)}{((1-\mu)+\mu\exp(h))^{n}}=\left\{
\left(\frac\theta\mu\right)^\theta
\left(\frac{1-\theta}{1-\mu}\right)^{1-\theta}
\right\}^{n}.
$$
\end{proof}
\\
A warning is in place, here.
A supermartingale was found to prove the validity of an upperbound of the probability of the event $\{\overline X\ge\theta\}$.
%
It is definitely not the case that $\{\sup_{h\ge0} M_k(h)\}_k$ is a test supermartingale.
\\
\\
Inspired by Poisson distributions we get
$$
g(t,\nu,\mu)
=
\frac{\nu^t\exp(-\nu)}{\mu^t\exp(-\mu)}=
\frac{\exp(ht)}{\exp(\mu(\exp(h)-1))}\hbox{ with }h=\log\left(\frac{\nu}{\mu}\right).
$$
One can show that $\E(g(X,\nu,\mu))\le1$ if $0\le X\le1$ a.s. 
such that $(\E(X)-\mu)(\nu-\mu)\le0$.
Here the corresponding test supermartingale can be expressed in terms of 
densities of Gamma distributions.
See Section \ref{Bounded case} for a testing method used in practice in statistical auditing which is said to be based on the Poisson distribution and which appears in the Audit Guide Audit Sampling \cite{AICPA}.


\section{Behavior of the test supermartingales $\{M_k(c)\}_k$}\label{Behavior}
Given $\mu>0$, we discuss the behavior of the test supermartingales $\{M_k(c)\}_k$
for null hypothesis $H_0:\E(X)\le\mu$ constructed according to 
(\ref{test martingale}) with $c_{k}=c$,
for all $k$,  $0\le c<1$. 
Let $X$ be a positive integrable random variable 
and $X_1,X_2,\ldots$ an iid sample of $X$.
We have
$$
\frac1k\log(M_k(c))=\frac1k\sum_{i=1}^k\log((1-c)+c\,X_i/\mu).
$$
Let $Z=X/\mu$, then $\E(Z)$ exists and 
$Z\ge0$ a.s.. 
Consider the function 
$$
\lambda(c)=\E(\log((1-c)+c\,Z)),\quad 0\le c<1.
$$
It is well defined since 
\begin{equation}\label{log bounds}
\log(1-c)\le\log((1-c)+c\,Z)\le -c+c\,Z\le Z.
\end{equation}
It also follows by dominated convergence that $\lambda(c)$ is continuous in $c$ for $0\le c<1$.
\begin{theorem}
$\lambda(c)$ is twice continuously differentiable and concave in $c$ for $0<c<1$.
$\lambda(0)=0$, $\lim_{c\downarrow0}\lambda'(c)=(\E(X)-\mu)/\mu$.
If $\P\{X=\mu\}<1$, it is strictly concave.
\end{theorem}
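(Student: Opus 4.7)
The plan is to differentiate under the expectation on compact subintervals $[c_1,c_2]\subset(0,1)$, read off formulas for $\lambda'$ and $\lambda''$, and exploit the sign of the second partial to get (strict) concavity. Write $g(c,z)=\log((1-c)+cz)$, so that
$$\partial_c g(c,z)=\frac{z-1}{(1-c)+cz},\qquad \partial_c^2 g(c,z)=-\frac{(z-1)^2}{((1-c)+cz)^2}.$$
The identity $\lambda(0)=\E(\log 1)=0$ is immediate, and by the bounds (\ref{log bounds}) $\lambda(c)$ is finite for $0\le c<1$.

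For the differentiation step, I would fix $0<c_1<c_2<1$ and bound $|\partial_c g(c,Z)|$ and $|\partial_c^2 g(c,Z)|$ uniformly on $[c_1,c_2]$ by constants independent of $Z$. When $Z\ge1$, $(1-c)+cZ=1+c(Z-1)\ge1$, so $(Z-1)/((1-c)+cZ)\le 1/c\le1/c_1$, and the square is at most $1/c_1^2$; when $0\le Z<1$, $(1-Z)/((1-c)+cZ)\le 1/(1-c)\le1/(1-c_2)$, with an analogous bound for the square. The standard dominated-convergence form of Leibniz's rule then gives
$$\lambda'(c)=\E\!\left[\frac{Z-1}{(1-c)+cZ}\right],\qquad \lambda''(c)=-\E\!\left[\frac{(Z-1)^2}{((1-c)+cZ)^2}\right]$$
for $c\in(0,1)$, and continuity of these in $c$ follows again by dominated convergence, so $\lambda\in C^2(0,1)$.

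Concavity is immediate from $\lambda''(c)\le0$. If $\P\{X=\mu\}=\P\{Z=1\}<1$, the integrand of $-\lambda''(c)$ is strictly positive on a set of positive probability, hence $\lambda''(c)<0$ for every $c\in(0,1)$, which yields strict concavity.

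For the one-sided limit at $0$, the previous bound $1/c_1$ blows up, so a different dominator is required on $(0,1/2]$. Checking cases, $|\partial_c g(c,Z)|\le Z+2$ there, because $(Z-1)/((1-c)+cZ)\le Z-1\le Z$ when $Z\ge 1$ and $(1-Z)/((1-c)+cZ)\le 2$ when $0\le Z<1$. Since $\E(Z)=\E(X)/\mu<\infty$ by assumption, this is an integrable majorant, and dominated convergence applied to $\lambda'(c)$ as $c\downarrow0$ yields $\lim_{c\downarrow0}\lambda'(c)=\E(Z-1)=(\E(X)-\mu)/\mu$. The main obstacle is precisely this last step: the natural domination at interior points degenerates as $c\to0$, so one has to invoke a $Z$-dependent bound for which the integrability of $X$ is essential.
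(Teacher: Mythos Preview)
Your proof is correct and follows essentially the same route as the paper: differentiate under the expectation using constant dominators on compact subintervals of $(0,1)$, read off $\lambda'(c)$ and $\lambda''(c)\le0$, and for the one-sided limit at $0$ replace the blowing-up constant by an integrable $Z$-dependent majorant. The paper uses the single two-sided bound $-1/(1-c_1)\le(z-1)/((1-c)+cz)\le z$ on $0\le c\le c_1$ where you split into the cases $Z\ge1$ and $Z<1$ to get $|\partial_c g|\le Z+2$, but this is only bookkeeping.
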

\begin{proof}
Notice that
\begin{equation}\label{derivatives}
\frac d{d c}\log((1-c)+cz)=\frac{z-1}{(1-c)+cz}
;\quad
\frac {d^2}{d c^2}\log((1-c)+cz)=-\frac{(z-1)^2}{((1-c)+cz)^2}.
\end{equation}
Moreover, for $0<c_0< c<c_1<1$,
\begin{equation}\label{log' bounds}
-\frac1{1-c_1}\le\frac{z-1}{(1-c)+cz}
\le \frac1{c_0}. %
\end{equation}
%
Thus differentiation of $\lambda(c)$ behaves decently
with respect to expected value in the range $c_0<c<c_1$ (see \cite[Thm. A.5.2]{D})
for all $0<c_0<c_1<1$, and we have with $Z=X/\mu$,
\begin{align*}
\lambda'(c)
&=
\E((Z-1)/(1-c+cZ))
\\
\lambda''(c)
&=
\E(-[(Z-1)/(1-c+c\, Z)]^2)\le0.
\end{align*}
By inequality (\ref{log' bounds}) and dominated convergence
it holds that $\lambda''(c)$ is continuous in $c$ for $0<c<1$.
Since $-1/(1-c_1)\le(z-1)/((1-c)+cz)\le z$ for $z\ge0$ and $0\le c\le c_1$
we obtain by dominated convergence
\begin{align*}
\lim_{c\downarrow0}\lambda'(c)
&=
\lim_{c\downarrow0}\E\left(\frac{Z-1}{1-c+c\, Z}\right)
=\E\left(\lim_{c\downarrow0}\frac{Z-1}{1-c+c\, Z}\right)=\E(Z-1)=\frac1\mu(\E(X)-\mu).
\end{align*}
If $\P\{X=\mu\}=\P\{Z=1\}<1$, then $\lambda''(c)<0$ for $0<c<1$
so that $\lambda(c)$ is strictly concave in $c$.
\end{proof}

\begin{remark}\label{L1}
From Equation (\ref{derivatives}) it follows that any realization 
of $L_n(c)=\log(M_n(c))$ (based on observations of $X_1,\ldots,X_n$) 
is concave in $c$.
Thus $M_n(c)$ is log-concave in $c$.
\end{remark}

\begin{corollary}\label{C2}
Suppose $\P\{X=\mu\}<1$.
If $\E(X)\le\mu$ and  $0<c<1$, then $\lambda(c)<0$, so that $\lim_{n\to\infty}M_n(c)=0$.
If $\E(X)>\mu$, then there is $c_{\max}>0$ such that for 
$0<c<c_{\max}$ we have $\lambda(c)>0$.
If $\lambda(c)>0$
then $\lim_{n\to\infty}M_n(c)=\infty$ a.s..
If moreover $\P\{X=0\}>0$, then $c_{\rm max}<1$.
\end{corollary}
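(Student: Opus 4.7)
The plan is to combine the structural information about $\lambda$ from the preceding theorem (strict concavity, $\lambda(0)=0$, and $\lim_{c\downarrow 0}\lambda'(c)=(\E(X)-\mu)/\mu$) with the strong law of large numbers applied to $\frac{1}{n}\log M_n(c)=\frac{1}{n}\sum_{i=1}^n\log((1-c)+cX_i/\mu)$.

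First I would settle the sign of $\lambda(c)$. Under $\P\{X=\mu\}<1$ the theorem makes $\lambda'$ continuous and strictly decreasing on $(0,1)$ with right-hand limit $(\E(X)-\mu)/\mu$ at $0$, so in particular $\lambda'(c)<\lambda'(0+)$ for $c\in(0,1)$. If $\E(X)\le\mu$, then $\lambda'(0+)\le 0$ forces $\lambda'(c)<0$ throughout $(0,1)$ and hence $\lambda(c)<\lambda(0)=0$. If instead $\E(X)>\mu$, then $\lambda'(0+)>0$ and continuity of $\lambda'$ at $0$ keeps it positive on some initial interval, making $\lambda$ strictly positive on $(0,c_{\max})$ for some $c_{\max}>0$.

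Next I would invoke the SLLN. The summands $\log((1-c)+cX_i/\mu)$ are iid and integrable, since inequality (\ref{log bounds}) bounds them below by the constant $\log(1-c)$ and above by the integrable $X_i/\mu$. Hence $\frac{1}{n}\log M_n(c)\to\lambda(c)$ almost surely, so $\lambda(c)<0$ forces $M_n(c)\to 0$ and $\lambda(c)>0$ forces $M_n(c)\to\infty$, completing (a), (b) and (c).

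For (d), under the additional assumption $\P\{X=0\}>0$ the plan is to show $\lambda(c)\to-\infty$ as $c\uparrow 1$, which rules out $\lambda>0$ persisting up to $1$. Writing $Z=X/\mu$, the quantity $Z-\log((1-c)+cZ)$, nonnegative by (\ref{log bounds}), diverges to $+\infty$ on the event $\{Z=0\}$ (of positive probability, since $\log(1-c)\to-\infty$), so Fatou's lemma yields $\liminf_{c\uparrow 1}\E[Z-\log((1-c)+cZ)]=+\infty$, and hence $\limsup_{c\uparrow 1}\lambda(c)=-\infty$ (using that $\E[Z]<\infty$). The main obstacle is exactly this boundary argument: the other parts fall out quickly from the shape of $\lambda$ and the SLLN, but $c_{\max}<1$ requires a limit interchange at $c=1$ where the integrand diverges to $-\infty$ on a set of positive probability, and (\ref{log bounds}) provides exactly the nonnegative/integrable structure needed to justify Fatou.
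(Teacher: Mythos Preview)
Your proof is correct and follows essentially the paper's approach: the sign of $\lambda(c)$ is read off from $\lambda(0)=0$, $\lim_{c\downarrow0}\lambda'(c)=(\E(X)-\mu)/\mu$ and strict concavity, and the SLLN on the iid integrable summands $\log((1-c)+cX_i/\mu)$ translates this into the limiting behavior of $M_n(c)$. For the last claim $c_{\max}<1$, the paper argues more directly than your Fatou step: splitting the expectation on $\{X=0\}$ and using $\log((1-c)+cZ)\le Z$ on the complement yields $\lambda(c)\le\P\{X=0\}\log(1-c)+\E(Z)$, which tends to $-\infty$ as $c\uparrow1$.
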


\begin{proof}
Suppose $\E(X)\le\mu$ and $\P\{X=\mu\}<1$, then $\lambda'(0)\le0$ and 
from the strict concavity of $\lambda(c)$ in $c$, it follows that $\lambda(c)<0$ for all $0<c<1$.
According to the strong law of large numbers, it follows that
$\lim_{k\to\infty}\frac1k\log(M_k(c))=\lambda(c)<0$ a.s., 
and therefore that $\lim_{k\to\infty}M_k(c)=0$ a.s. 
(despite the fact that in case $\E(X)=\mu$ we have $\E(M_k)=1$ for all $k$, cf. \cite[Ex. 5.2.9]{D}).
If $\E(X)>\mu$, then $\lambda'(0)>0$ and there is $c_{\max}>0$ such that $\lambda(c)>0$
for $0<c<c_{\max}$.
If $\lambda'(0)>0$ we have  $\lim_{k\to\infty}\frac1k\log(M_k(c))=\lambda(c)>0$ a.s. and therefore
$\lim_{k\to\infty}M_k(c)=\infty$ a.s..
\\
If $\P\{X=0\}>0$ then $\lambda(c)\ge\log(1-c)\P\{X=0\}$ so that 
$\lim_{c\uparrow1}\lambda(c)=-\infty$. 
\end{proof}
\\
\\
The Corollary implies the following theorem.

\begin{theorem}[Consistency]\label{Consistency}
Suppose $\E(X)>\mu$ and $X\ge0$ a.s..
Then there is $c_{\max}>0$ such that $\lim_{n\to\infty} M_n(c)=\infty$ a.s. for $0<c<c_{\max}$.
Let $\pi$ be a probability density on $[0,1]$ such that for all $0<b<1$ there is $0<a<b$
such that $\int_a^b\pi(c)dc>0$.
Then the integrated test supermartingale $\{M_n(\pi)\}_n$
satisfies $\lim_{n\to\infty} M_n(\pi)=\infty$ a.s.,
and the test based on $\{M_n(\pi)\}_n$ is consistent, i.e. the power of the test is 1.
\end{theorem}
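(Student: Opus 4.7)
The first assertion falls out of Corollary \ref{C2}. Since $\E(X)>\mu$ precludes $\P\{X=\mu\}=1$, the corollary applies and delivers $c_{\max}>0$ with $\lambda(c)>0$ on $(0,c_{\max})$, whence $M_n(c)\to\infty$ a.s.\ for each fixed $c$ in that interval.

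For the integrated supermartingale the plan is to apply Fatou's lemma to
\[
M_n(\pi)=\int_0^1 M_n(c)\,\pi(c)\,dc,
\]
reducing the claim to showing that $M_n(c)\to\infty$ a.s.\ holds \emph{simultaneously} for $c$ ranging over a set of positive $\pi$-measure. The assumption on $\pi$ supplies $0<a<b<c_{\max}$ with $\int_a^b\pi(c)\,dc>0$. The main obstacle is that Corollary \ref{C2} gives $M_n(c)\to\infty$ a.s.\ only for each fixed $c$, so a priori the exceptional null sets vary with $c$; the uncountable union could in principle have full measure. To overcome this I would invoke the log-concavity of $c\mapsto M_n(c)$ from Remark \ref{L1}. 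Fix a countable dense subset $\{c_i\}\subset(0,c_{\max})$ and restrict to the full-measure event on which $M_n(c_i)\to\infty$ for every $i$. On this event, given any $c\in(0,c_{\max})$, choose $c_i,c_j$ in the dense set with $c_i<c<c_j<c_{\max}$ and write $c=tc_i+(1-t)c_j$ with $t\in(0,1)$; concavity of $\log M_n(\cdot)$ yields
\[
\log M_n(c)\;\ge\;t\log M_n(c_i)+(1-t)\log M_n(c_j)\;\longrightarrow\;\infty,
\]
so $M_n(c)\to\infty$ holds simultaneously for every $c\in(0,c_{\max})$ almost surely.

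Fatou's lemma, applied on $[0,1]$ to the nonnegative integrands $c\mapsto M_n(c)\pi(c)$, then gives
\[
\liminf_{n\to\infty}M_n(\pi)\;\ge\;\int_0^1\liminf_{n\to\infty}M_n(c)\,\pi(c)\,dc\;\ge\;\int_a^b\infty\cdot\pi(c)\,dc=\infty,
\]
so $M_n(\pi)\to\infty$ a.s. Consistency follows at once: the rejection event $\{\exists n: M_n(\pi)\ge1/\alpha\}$ contains $\{M_n(\pi)\to\infty\}$, which has probability $1$ under $\E(X)>\mu$, so the power equals $1$. As an alternative route to bypass Fatou one could use that pointwise limits of concave functions converge uniformly on compact subsets of the interior of their domain, giving a uniform lower bound $\delta>0$ on $n^{-1}\log M_n(c)$ over $[a,b]$ for $n$ large and hence $M_n(\pi)\ge e^{n\delta/2}\int_a^b\pi(c)\,dc\to\infty$.
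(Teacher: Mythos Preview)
Your proof is correct, and the key lemma you invoke---log-concavity of $c\mapsto M_n(c)$ from Remark \ref{L1}---is exactly what the paper uses as well. The route differs, though.

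The paper's argument is more economical. It fixes $0<a<b<c_{\max}$ with $p=\int_a^b\pi(c)\,dc>0$, works on the almost-sure event $\{M_n(a)\to\infty\}\cap\{M_n(b)\to\infty\}$ (only \emph{two} values of $c$, so no countable intersection is needed), and observes that log-concavity forces $M_n(c)\ge\min\{M_n(a),M_n(b)\}$ for every $c\in[a,b]$. Hence $M_n(\pi)\ge p\cdot\min\{M_n(a),M_n(b)\}\to\infty$ directly, with no appeal to Fatou.

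Your argument instead establishes the stronger intermediate claim that $M_n(c)\to\infty$ simultaneously for \emph{all} $c\in(0,c_{\max})$ on one full-measure event, via a countable dense set plus interpolation, and then invokes Fatou. This is valid and the stronger simultaneity statement could be useful elsewhere, but for the present theorem the paper's two-point version is all that is needed and avoids both the density argument and Fatou. Your ``alternative route'' at the end (uniform convergence on compacts) is closer in spirit to the paper's approach, but still more than required: two endpoints suffice.
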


\begin{proof}
The first claim follows from Corollary \ref{C2}.
Let $0<a<b<c_{\max}$, such that $p=\int_a^b\pi(c)dc>0$.
We have $\lim_{n\to\infty} M_n(c)=\infty$ a.s. for $c=a,b$.
Be given any $R>0$,
let $N$ be such that $M_N(a)>R/p$ and $M_N(b)>R/p$. 
Since $M_N(c)$
is a log-concave function in $c$ we have $M_N(c)>R/p$ for all $c\in[a,b]$, so that
$M_N(\pi)>(R/p)\cdot p=R$.
\end{proof}

\subsubsection*{Performance depending on $X$}
In the remainder of the section we will give some tools to evaluate the performance of the test
using test supermartingales as above using an iid sample of $X$, 
in case the rv $X$ is known.
We are interested in the mean value of the sample size $N$ needed to reject the null hypothesis.
\begin{lemma}\label{moments}
Let $Y$ be an integrable random variable such that $Y\ge c$ a.s. for some $c>0$.
Then the moment generating function 
$\phi(h)=\E(\exp(h\,\log(Y)))=\E(Y^h)$ is finite for all $h\le1$.
In particular all moments of $\log(Y)$ are finite.
\end{lemma}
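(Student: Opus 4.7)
The plan is to prove finiteness of $\phi(h)=\E(Y^h)$ by splitting the analysis according to the sign of $h$, exploiting the lower bound $Y\ge c>0$ on the negative side and the integrability of $Y$ on the positive side.

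First, for $h\le0$, since $Y\ge c>0$ a.s., the map $y\mapsto y^h$ is non-increasing on $[c,\infty)$, so $Y^h\le c^h$ a.s., and hence $\E(Y^h)\le c^h<\infty$. Second, for $0\le h\le1$, I would bound $Y^h$ by a pointwise envelope that depends only on $Y$ itself. On the event $\{Y\le 1\}$ one has $Y^h\le 1$ (using $h\ge0$), and on the event $\{Y>1\}$ one has $Y^h\le Y^1=Y$ (using $h\le1$). Combining these two observations gives $Y^h\le 1+Y$, which is integrable by assumption, yielding $\E(Y^h)\le 1+\E(Y)<\infty$.

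For the final clause about moments of $\log Y$, the key observation is that $\phi(h)$ is finite on an open neighborhood of $0$ (in fact on $(-\infty,1]$), so it suffices to dominate $|\log Y|^n$ by a multiple of $Y^h+Y^{-h}$ for some fixed $h>0$. Concretely, for any $h>0$ and any $y>0$ one has the elementary inequality
\[
\frac{|h\log y|^n}{n!}\le \exp(h|\log y|)\le y^h+y^{-h},
\]
so that $\E(|\log Y|^n)\le n!\,h^{-n}\bigl(\E(Y^h)+\E(Y^{-h})\bigr)<\infty$ by the first two cases (with $h=1/2$, say).

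The argument is entirely elementary; there is no real obstacle. The only point requiring a moment's care is the case split producing the envelope $Y^h\le 1+Y$ for $0\le h\le1$, which is the heart of why integrability of $Y$ transfers to integrability of $Y^h$ in that range.
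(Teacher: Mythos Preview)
Your proof is correct. The argument for $h\le 0$ matches the paper's exactly. For $0<h<1$ and for the moments of $\log Y$, you take a more explicit route than the paper: you produce the pointwise envelope $Y^h\le 1+Y$ directly, and you bound $|\log Y|^n$ by $n!\,h^{-n}(Y^h+Y^{-h})$ via the Taylor series inequality. The paper instead argues only that $\phi(0)$ and $\phi(1)$ are finite and then invokes the standard moment generating function machinery (citing Billingsley): finiteness at the two endpoints forces finiteness on the whole interval $[0,1]$ by log-convexity/H\"older, and finiteness on an open neighborhood of $0$ gives infinite differentiability there, whence all moments of $\log Y$ are the derivatives $\phi^{(k)}(0)$. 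Your version is more self-contained and avoids the reference; the paper's version is shorter but relies on the reader knowing the structure theorem for mgfs.
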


\begin{proof}
For $h=1$ we have $\exp(h\log(Y))=Y$ which is integrable, so that $\phi(1)<\infty$.
For $h\le0$ we have $0\le\exp(h\log(Y))\le \exp(h\log(c))=c^h$ so that $\exp(h\log(Y))$ is integrable,
and $\phi(h)<\infty$ for $h\le0$.
Together this implies that $\phi(h)<\infty$ for all $h\le1$, 
that $\phi(h)$ is infinitely differentiable in the region $h<1$, and that the $k$-th order moment of $\log(Y)$ equals
the $k$-th derivative of $\phi(h)$ at $h=0$ (see e.g. \cite[Sec. 21]{B}).
\end{proof}
\\
\\
Let $X$ be a rv with $X\ge0$ a.s. such that $\E(X)=\nu>\mu$.
Let
$0<c<1$ such that with 
$Y=(1-c)+c\,X/\mu$ 
we have $\omega=\E(\log(Y))>0$. 
Let $\rho^2=\Var(\log(Y))$ which is finite because of Lemma \ref{moments}.
Given an iid sample $X_1,X_2,\ldots$ of $X$, let $M_n(c)$ be the test supermartingale
with $i$-th factor $Y_i=(1-c)+c\,X_i/\mu$.
For $0<\alpha<1$
let 
$$
N=N_\alpha=\inf\{n\mid M_n(c)\ge1/\alpha\}
=\inf\{n\mid \log(M_n(c))=\sum_{i=1}^n\log(Y_i)\ge\log(1/\alpha)\}.
$$
By Corollary \ref{C2} we have $N_\alpha<\infty$ a.s.. 
From Theorem 2.5 in \cite[Ch. 2.5]{CRS} on first passage times,
it follows that for all $0<\alpha<1$ we have $\E(N_\alpha)<\infty$
and $\E(N_\alpha^2)<\infty$ and 
\begin{equation}\label{renewal theorem}
\lim_{\alpha\downarrow0}\frac{\E(N_\alpha)}{\log(1/\alpha)}=\frac1\omega
\hbox{ and }
\lim_{\alpha\downarrow0}\frac{\Var(N_\alpha)}{\log(1/\alpha)}=\frac{\rho^2}{\omega^3}.
\end{equation}
Lindeberg's condition (2.29) \emph{l.c.} holds as $\{\log(Y_i)\}_i$ is an iid sample of square integrable rv $\{\log(Y)\}_i$.
Since $\E(N_\alpha)<\infty$ 
Wald's equation (\cite[Thm. 4.1.5]{D}) gives the more precise expression
\begin{equation}\label{Wald's first equation}
\E(\log(M_{N_\alpha}(c)))=\E(\log(Y))\E(N_\alpha)=\omega\,\E(N_\alpha).
\end{equation}
%
Recall Lorden's inequality (\cite{L}
) for the expected excess,
which is independent of $\alpha$:
$$
\E(\log(M_{N_\alpha}(c)))-\log(1/\alpha)
\le 
\frac{\E((Z^+)^2)}{\E(Z)}
\le
\omega+\frac{\rho^2}{\omega}.$$
With Wald's equation (\ref{Wald's first equation}) this gives
$\log(1/\alpha)/\omega\le\E(N_\alpha)\le\log(1/\alpha)/\omega+1+\rho^2/\omega^2.$

\subsubsection*{Dependence on the mean and variance of $X$}

In this subsection we will consider in more detail the performance
of the test supermartingales $\{M_n(c)\}$ for $H_0:\E(X)\le\mu$, based on a sample
of $X$ with $\E(X)=\theta>\mu$ and finite variance $\Var(X)\le\sigma^2$.
Let $X_0$ be a 2-point distribution with values 0 and $\tau=(\theta^2+\sigma^2)/\theta$
and $\P\{X=\tau\}=\theta/\tau$ so that $\E(X)=\theta$, $\E(X^2)=\tau^2\theta/\tau=\theta^2+\sigma^2$
and $\Var(X_0)=\sigma^2$.
We will show that among rv's $X$ as above the test supermartingales perform worst for $X_0$.

\begin{lemma}\label{Jensen++}
Let $\ell:[0,\infty)\to \R$ be a three times continuous differentiable function
such that $\ell''(x)\le0$ and 
$\ell'''(x)\ge0$ for all $x\ge0$.
Then $\E(\ell(X))\ge \E(\ell(X_0))$.
\end{lemma}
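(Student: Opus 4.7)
The strategy is to find a concave quadratic $q(x)=a+bx+cx^{2}$ with $c\le 0$ which agrees with $\ell$ at the two support points $0$ and $\tau$ of $X_{0}$ and satisfies $q(x)\le\ell(x)$ for all $x\ge 0$. Once such a $q$ is available, the conclusion is immediate: since $c\le 0$ and, by hypothesis, $\E(X^{2})=\theta^{2}+\Var(X)\le\theta^{2}+\sigma^{2}=\E(X_{0}^{2})$,
\[
\E[\ell(X)]\ \ge\ \E[q(X)]\ =\ a+b\theta+c\,\E(X^{2})\ \ge\ a+b\theta+c\,\E(X_{0}^{2})\ =\ \E[q(X_{0})]\ =\ \E[\ell(X_{0})],
\]
the last equality because $X_{0}$ is supported on $\{0,\tau\}$ where $q$ and $\ell$ coincide. (Integrability of $\ell(X)$ is not an issue: $\ell(X)\ge q(X)$ and $\E(X^{2})<\infty$, so $\E[\ell(X)]$ is well-defined in $(-\infty,\infty]$.)

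To construct $q$ I would take the quadratic matching $\ell$ in value \emph{and} derivative at $\tau$ and in value at $0$; explicitly,
\[
q(x)\ =\ \ell(\tau)+\ell'(\tau)(x-\tau)+\gamma\,(x-\tau)^{2},
\]
with $\gamma$ determined by $q(0)=\ell(0)$. Taylor's theorem with Lagrange remainder, applied to $\ell$ on $[0,\tau]$, then gives $\gamma=\tfrac12\ell''(\eta)$ for some $\eta\in(0,\tau)$, so the hypothesis $\ell''\le 0$ delivers $\gamma\le 0$ for free.

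The substantive step is showing $h:=\ell-q\ge 0$ on $[0,\infty)$, and this is the point where $\ell'''\ge 0$ does the work. We have
\[
h''(x)=\ell''(x)-\ell''(\eta),
\]
so monotonicity of $\ell''$ gives $h''\le 0$ on $[0,\eta]$ and $h''\ge 0$ on $[\eta,\infty)$; the critical feature is that $h''$ changes sign \emph{exactly once}. On $[\eta,\infty)$, $h$ is convex with $h(\tau)=h'(\tau)=0$, so $\tau$ is a minimiser and $h\ge 0$ there; in particular $h(\eta)\ge 0$. On $[0,\eta]$, $h$ is concave with $h(0)=0$ and $h(\eta)\ge 0$, and a concave function lies above its chord, so $h(x)\ge(x/\eta)h(\eta)\ge 0$. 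Combining the two ranges gives $h\ge 0$ everywhere on $[0,\infty)$.

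The main obstacle is choosing the right quadratic. Symmetric choices (e.g.\ matching value and derivative at $0$, or matching value at both endpoints together with the second derivative at $0$) yield an $h$ with the \emph{wrong} sign on part of $[0,\tau]$. The asymmetric prescription above exploits the single sign-change of $h''$ at the Taylor point $\eta\in(0,\tau)$ to keep the concave piece on the left of the crossing and the convex piece on the right, which is exactly the configuration needed to force $h\ge 0$ everywhere.
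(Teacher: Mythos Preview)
Your proof is correct and, at the structural level, coincides with the paper's: you choose the \emph{same} quadratic (matching $\ell$ in value at $0$ and in value and derivative at $\tau$), show it is concave, and then run the same two-step comparison $\E[\ell(X)]\ge\E[q(X)]\ge\E[q(X_0)]=\E[\ell(X_0)]$.

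The genuine difference is in how you establish $\ell\ge q$ on $[0,\infty)$. The paper writes $L=\ell-q$ as the solution of the boundary-value problem $L'''=\ell'''$, $L(0)=L(\tau)=L'(\tau)=0$, computes the Green's function explicitly, and checks that $G(x,u)\ge0$; positivity of $L$ then follows from $\ell'''\ge0$. The paper then reads off $w\le0$ from $L''(\tau)\ge0$ (since $\tau$ is a minimum of $L$). Your route is more elementary: Taylor's remainder gives $\gamma=\tfrac12\ell''(\eta)\le0$ directly, and the single sign change of $h''$ at $\eta$ (from $\ell'''\ge0$) lets you settle $h\ge0$ by a short convex/concave argument on the two subintervals. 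Your approach avoids the Green's function computation and makes the role of $\ell'''\ge0$ transparent (it forces monotonicity of $\ell''$, hence one sign change of $h''$); the paper's approach is more mechanical and would port more easily to higher-order variants. One cosmetic point: ``changes sign exactly once'' is slightly stronger than what you need or prove---$\ell''$ need not be strictly increasing---but your argument only uses $h''\le0$ on $[0,\eta]$ and $h''\ge0$ on $[\eta,\infty)$, which is exactly what monotonicity of $\ell''$ delivers.
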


\begin{proof}
Let $h(x)=u+vx+wx^2$ be such that 
$h(0)=\ell(0)$, $h(\tau)=\ell(\tau)$ and $h'(\tau)=\ell'(\tau)$ then $\ell(X_0)=h(X_0)$ and
we will show $\ell(x)\ge h(x)$ for all $x\ge0$.
Function $L(x)=\ell(x)-h(x)$ satisfies the ordinary differential equation
$L'''(x)=\ell'''(x)$ for $x\ge0$ and the 'boundary' conditions $L(0)=L(\tau)=L'(\tau)=0$.
The corresponding Green's function is
$$
G(x,u)=
\begin{cases}
\frac12(x-u)^2, \hbox{ if } \tau<u<x
\\
\frac12\frac{u^2}{\tau^2}(x-\tau)^2, \hbox{ if } 0<u<\tau \hbox{ and } u<x
\\
(\tau-u)\frac x\tau((u-x)+\frac12\frac x\tau(\tau-u)), \hbox{ if } 0<x<u<\tau
\\
0, \hbox{ if } \tau<u \hbox{ and } x<u 
\end{cases}
$$
and $L(x)=\int_0^{\max(x,\tau)}G(x,u)\ell'''(u)du$.
Since $G(x,u)\ge0$ for all $x>0$ and $u>0$, and $\ell'''(x)\ge0$ it follows that $L(x)\ge0$.
In particular $L''(\tau)=\ell''(\tau)-h''(\tau)=\ell''(\tau)-2w\ge0$ so that
$w\le\frac12\ell''(\tau)\le0$.
It follows that $\E(\ell(X))\ge\E(h(X))=u+v\theta+w(\theta^2+\Var(X))
\ge u+v\theta+w(\theta^2+\sigma^2)=\E(h(X_0))=\E(\ell(X_0))$.
\end{proof}

\begin{theorem}\label{ThmDKL}
Let $\theta>\mu$ and $\sigma^2>0$.
With $\tau=(\theta^2+\sigma^2)/\theta$  
the random variable $X_0$ with values $0$ and $\tau$ such that 
$\P\{X_0=\tau\}=\theta/\tau$ and 
$\P\{X_0=0\}=1-\theta/\tau$ satisfies $\E(X_0)=\theta$
and $\Var(X)=\sigma^2$.
Furthermore, let
$$\lambda_0(c)=\E(\log((1-c)+cX_0/\mu))
=
(1-\frac\theta\tau)\log(1-c)
+\frac\theta\tau\log(1-c+c\tau/\mu).$$
Its maximum is $c_0=(\theta-\mu)/(\tau-\mu)$ and the maximum value is
\begin{equation}\label{DKL}
\left(1-\frac\theta\tau\right)\log\left(\frac{\tau-\theta}{\tau-\mu}\right)
+
\frac\theta\tau\log\left(\frac{\theta}{\mu}\right)=D_\KL(\Alt(\theta/\tau)\|\Alt(\mu/\tau)),
\end{equation}
where $D_\KL(P\|Q)$ denotes the Kullback-Leibler divergence from $Q$ to $P$.
\\
Let $X$ be a positive random variable with $\E(X)=\theta$ and variance $\Var(X)\le\sigma^2$.
Then $\lambda(c)=\E(\log((1-c)+cX/\mu))\ge\lambda_0(c)$ and
the maximum $c_{{\rm opt}}$ of $\lambda(c)$ is at least 
$c_0$.
\end{theorem}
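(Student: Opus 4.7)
The plan is to handle the three assertions of the theorem in sequence, leaning on Lemma \ref{Jensen++} as the main engine for both comparison claims.

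First I would dispatch the distributional facts about $X_0$ by direct calculation: $\E(X_0)=\tau\cdot(\theta/\tau)=\theta$ and $\E(X_0^2)=\tau^2\cdot(\theta/\tau)=\theta\tau=\theta^2+\sigma^2$, so $\Var(X_0)=\sigma^2$. The stated formula for $\lambda_0(c)$ is then immediate from the two-point expectation.

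Next I would locate the maximum of $\lambda_0$. Differentiating gives
$$\lambda_0'(c)=-\frac{1-\theta/\tau}{1-c}+\frac{(\theta/\tau)(\tau/\mu-1)}{1-c+c\tau/\mu},$$
and the two identities $1-c_0=(\tau-\theta)/(\tau-\mu)$ and $1-c_0+c_0\tau/\mu=\theta/\mu$, both immediate from the definition of $c_0$, show $\lambda_0'(c_0)=0$. Concavity of $\lambda_0$ (by the earlier theorem applied to the positive rv $X_0/\mu$) makes $c_0$ the maximizer. Substituting the same two identities into $\lambda_0(c_0)$ recovers $(1-\theta/\tau)\log((\tau-\theta)/(\tau-\mu))+(\theta/\tau)\log(\theta/\mu)$, which is precisely $D_\KL(\Alt(\theta/\tau)\|\Alt(\mu/\tau))$.

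The comparison $\lambda(c)\ge\lambda_0(c)$ is a direct application of Lemma \ref{Jensen++} to $\ell(x)=\log((1-c)+cx/\mu)$: a short computation gives $\ell''(x)=-(c/\mu)^2/((1-c)+cx/\mu)^2\le0$ and $\ell'''(x)=2(c/\mu)^3/((1-c)+cx/\mu)^3\ge0$ on $[0,\infty)$, so the hypotheses of the lemma are satisfied and it delivers $\E(\ell(X))\ge\E(\ell(X_0))$.

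The main obstacle is the final assertion $c_{{\rm opt}}\ge c_0$, because the pointwise bound $\lambda\ge\lambda_0$ is not on its own enough to compare the locations of the maxima. My plan is to show directly that $\lambda'(c_0)\ge0$; concavity of $\lambda$ will then force its maximizer to the right of $c_0$. Starting from the formula $\lambda'(c)=\E((X-\mu)/(\mu(1-c)+cX))$ derived earlier, the two identities above convert this at $c=c_0$ into
$$\lambda'(c_0)=(\tau-\mu)\,\E\!\left(\frac{X-\mu}{\mu(\tau-\theta)+(\theta-\mu)X}\right),$$
so I set $g(x)=(x-\mu)/[\mu(\tau-\theta)+(\theta-\mu)x]$. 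Writing $g(x)=(x-\mu)/(a+bx)$ with $a=\mu(\tau-\theta)>0$ and $b=\theta-\mu>0$, elementary differentiation gives $g''(x)=-2b(a+b\mu)/(a+bx)^3\le0$ and $g'''(x)=6b^2(a+b\mu)/(a+bx)^4\ge0$ on $[0,\infty)$, since $a+b\mu=\mu(\tau-\mu)>0$. Lemma \ref{Jensen++} therefore applies and yields $\E(g(X))\ge\E(g(X_0))$, and a two-point evaluation of the right-hand side collapses to $0$ (it equals $\lambda_0'(c_0)/(\tau-\mu)=0$). Hence $\lambda'(c_0)\ge0$, and concavity of $\lambda$ completes the argument.
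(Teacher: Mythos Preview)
Your proposal is correct and follows essentially the same route as the paper: apply Lemma \ref{Jensen++} to $\ell(x)=\log((1-c)+cx/\mu)$ for the inequality $\lambda(c)\ge\lambda_0(c)$, and then apply Lemma \ref{Jensen++} a second time to the $c$-derivative of $\ell$ at $c=c_0$ to obtain $\lambda'(c_0)\ge\lambda_0'(c_0)=0$, concluding via concavity. Your function $g$ is exactly $(\tau-\mu)^{-1}\,\partial_c\ell(x)\big|_{c=c_0}$, so the two arguments are the same up to a harmless positive constant; the only difference is that you spell out the verification of the sign conditions on $\ell''$, $\ell'''$, $g''$, $g'''$, which the paper leaves implicit.
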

\begin{remark}\label{bddcase}
If the rv $X$ with $\E(X)=\nu\ge\theta>\mu$ is bounded, say $0\le X\le\gamma$ a.s.,
then $\E(X^2)\le\E(X\gamma)=\nu\gamma$ so that 
$\Var(X)\le\nu(\gamma-\nu)$ and
$\tau=(\nu^2+\nu(\gamma-\nu))/\nu=\gamma$. 
We find the lower bound 
$(\nu-\mu)/(\gamma-\mu)\ge(\theta-\mu)/(\gamma-\mu)$ for $c_{{\rm opt}}$ independent of $\Var(X)$.
\end{remark}

\begin{proof}[Proof of Theorem \ref{ThmDKL}]
Only the claims about $\lambda(c)$ need
 explanation.
The inequality $\lambda(c)\ge\lambda_0(c)$
is based on Lemma \ref{Jensen++} applied to $\ell(x)=\log(1-c+cx/\mu)$. 
Lemma \ref{Jensen++} applied to 
$\frac{\partial}{\partial c}\ell(x)=(-1+x/\mu)/(1-c+cx/\mu)$
leads to 
$$\left.\frac{\partial}{\partial c}\right|_{c=c_0}\E(\log(1-c+cX/\mu))
\ge
\left.\frac{\partial}{\partial c}\right|_{c=c_0}\E(\log((1-c)+cX_0/\mu))=0.$$
To conclude, since $\lambda(c)$ is a convex function in $c$. 
and its derivative at $c_0$
is non-negative, its maximum $c_{{\rm opt}}$ satisfies $c_{{\rm opt}}\ge c_0$.
\end{proof}
\\
\\
We would like to stress the fact that the performance of the above tests 
depends on the variance of the random variable $X$, in contrast to the tests based on test supermartingales using factors as proposed in Section \ref{SPRT}. 
Given that $\E(X)=\nu>\mu$, by Jensen's inequality
the quantity $\E(\log((1-c)+cX/\mu))$ 
is maximal for the constant rv $X=\nu$, so that in that case $\E(N_\alpha)$ will be minimal. 
On the other hand, consider as in Section \ref{SPRT}  
rv's $X$, such that the moment generating
function exists and satisfies the inequality 
$$\E(\exp(hX))\le\varphi(h,\E(X))$$
for $h$ in some open interval containing 0. 
\\
Suppose for $h\ge0$ that $\varphi(h,\nu)$ is increasing in $\nu$. 
Then 
we may build a test supermartingale for $H_0:\E(X)\le\mu$ based on factors
$g(X_i,h,\mu)=\exp(hX_i)\varphi(h,\mu)^{-1}$, where $X_1,X_2,\ldots$ is an iid sample of $X$.
Define 
$$\lambda(h)=\E(\log(g(X,h,\mu)))=\E(hX-\log(\varphi(h,\mu)))=h\E(X)-\log(\varphi(h,\mu)).$$
Assume $\varphi(0,\mu)=1$ 
and $\left.\frac\partial{\partial h}\right|_{h=0}\varphi(h,\mu)=\mu$.
Suppose $\E(X)>\mu$ then there exists $h>0$ such that $\omega=\lambda(h)>0$.
It follows from equations (\ref{renewal theorem}) that 
$\lim_{\alpha\downarrow0}\E(N_\alpha)/\log(1/\alpha)$
only depends on $\E(X)$ through 
$\omega=\lambda(h)=h\E(X)-\log(\varphi(h,\mu))$.
\\
\\
We will work this out for the example $\varphi(h,\mu)=(1-\mu)+\mu\,\exp(h)$.
Recall that
random variables $X$ with values in $[0,1]$ satisfy condition $\E(\exp(hX))\le\varphi(h,\E(X))$.
\\
If $\E(X)=\theta>\mu$, we find maximum $h=\log(\theta(1-\mu)/(\mu(1-\theta)))$ for $\E(\log(g(X,h,\mu)))$
with maximum value $\omega_m=\theta\log(\theta/\mu)+(1-\theta)\log((1-\theta)/(1-\mu))$
which only depends on $\E(X)=\theta$.
In case $X$ is Bernoulli distributed, the resulting test martingale coincides with
test martingale (\ref{test martingale}) with test parameters 
$c_k=c=(\theta-\mu)/(1-\mu)$.
Given $\E(X)=\theta$ and $0\le X\le1$ a.s., so that $\Var(X)\le\theta(1-\theta)$,
according to Theorem \ref{ThmDKL} the latter test martingale performs worst if $X$ is Bernoulli  $\Alt(\theta)$ distributed.
\\
For the constant variable $X=\theta$, with the above test parameter $c=(\theta-\mu)/(1-\mu)$ we would get 
$\omega_\theta=\log((1-\theta)/(1-\mu)+((\theta-\mu)/(1-\mu))\theta/\mu)$.
For this $X$ even parameter $c=1$ is allowed giving  
$\omega_1=\log(\theta/\mu)$.
For $\mu=1-0.05$ and $\theta=1-0.02$ we have $\omega_m=0.01214$ to be compared with 
$\omega_\theta=0.01877$
resp. $\omega_1=0.03109$, yielding an improvement in the expected sample size by roughly 35\%, resp. 60\%.

\section{Application in audit sampling}\label{Bounded case}
In this section we consider the test supermartingales 
(\ref{test martingale}) and (\ref{its}) developed in Section \ref{H0},
under the assumption that the random variable $X$
is bounded, that is, there are $\tau_0<\tau_1$ such that $\tau_0\le X\le\tau_1$ a.s..
By transforming $X$ to $T=(X-\tau_0)/(\tau_1-\tau_0)$ or $T=(\tau_1-X)/(\tau_1-\tau_0)$
we will restrict our attention to
 null hypotheses of the form
$H_0:\E(T)\ge\mu$ under the condition that $0\le T\le1$ a.s..
This is the context closest to the intended application in audit sampling.
\\
\\
We summarize some relevant facts of the test procedure.
Suppose that $0\le T\le1$ a.s. and $0\le \E(T)=\nu<\mu<1$.
We consider test supermartingales $\{M_n(c)\}_n$ for $H_0:\E(1-T)\le1-\mu$,
or equivalently for $H_0:\E(T)\ge\mu$.
Given a random sample $T_1,T_2,\ldots$ of $T$, test parameter $c$ such that $0\le c<1$ 
and time $n$, they are defined by
$$M_n(c)=\prod_{i=1}^n(1-c+c(1-T_i)/(1-\mu)).$$
The test starts  with the choice of $c$ at time 0 and and the sequential calculation of $M_n(c)$ and leads to rejection of $H_0$ 
if there is $n=N_\alpha$ for which $M_n(c)\ge1/\alpha$ at which time one may stop sampling.
\\
It follows from Theorem \ref{ThmDKL} that worst case in terms of the expectation of $N_\alpha$, is achieved for Bernoulli variable
$T^\nu$ with success probability $\nu$, $T^\nu\sim\Alt(\nu)$.
For this variable the optimal choice of $c$ is $c_0=(\nu-\mu)/\mu$.
Moreover, for the original variable $T$ the optimal value of $c$ satisfies $c\ge c_0$.
If $c$ satisfies $\E(\log(1-c+c(1-T)/(1-\mu)))
>0$, 
the power of the test is 1.
This is the case at least for $0<c\le c_0$.
If  $\E(\log(1-c+c(1-T)/(1-\mu))\ge\omega>0$ for some known $\omega$,
the first item in the theorem on first passage times given in Formula (\ref{renewal theorem}) leads to approximate upper bound $\log(1/\alpha)/\omega$ for the mean time $\E(N_\alpha)$ at which $H_0$ can be rejected when using
parameter $c$.
This holds for $c=c_0$ with $\omega=\E(\log(1-c_0+c_0(1-T^\nu)/(1-\mu)))=D_\KL(\Alt(\nu)\|\Alt(\mu))>0$.
\\
\\
The condition at rejection time $n$ that $M_n(c)\ge1/\alpha$ ensures that the
significance level of the testing procedure is $\alpha$.
In the next theorem we put some bounds on the actual probability of Type I error.
\begin{theorem}\label{size}
Let $T$ be a random variable such that $0\le T\le1$ a.s., $\E(T)=\mu$
and $\P(T=\mu)<1$. 
Let $0<c<1$
and consider the test supermartingale $\{M_k(c)\}_k$ for $H_0:\E(T)\ge\mu$ with 
multiplication factor $((1-c)+c\,(1-T_k)/(1-\mu))$
at time $k$.
The probability of Type I error of the test is at most $\alpha$ but greater than 
$\alpha/(1-c+c/(1-\mu))>\alpha(1-\mu)$.
\end{theorem}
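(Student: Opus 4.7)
Let $A=\{\exists k:M_k(c)\ge1/\alpha\}$ denote the rejection event and $N=\inf\{k:M_k(c)\ge1/\alpha\}$, so $A=\{N<\infty\}$. The upper bound is immediate: under $\E(T)=\mu$ the multiplication factor $1-c+c(1-T_k)/(1-\mu)$ has conditional expectation exactly $1$ given the past, so $\{M_k(c)\}_k$ is in fact a positive martingale with $M_0=1$, and Lemma \ref{maximal lemma} delivers $\P(A)\le\alpha$.

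For the lower bound, set $K=1-c+c/(1-\mu)$. First I would observe that each multiplication factor is bounded above by $K$ (attained at $T_k=0$), so that on $A$ (assuming $N\ge1$, the only nontrivial case) we have $M_N(c)\le K\,M_{N-1}(c)<K/\alpha$ strictly, by the minimality of $N$. Next I would apply optional stopping to the stopped martingale $\{M_{k\wedge N}(c)\}_k$, which is bounded by $K/\alpha$ and hence uniformly integrable, with $\E[M_{k\wedge N}(c)]=1$ for every $k$. On the complementary event $A^c$ one has $N=\infty$ and $M_k(c)\to 0$ a.s.\ by Corollary \ref{C2}, applied with the substitution $X=1-T$, $\mu'=1-\mu$: the hypothesis $\P(T=\mu)<1$ translates to $\P(X=\mu')<1$, and $\E(X)=\mu'$ satisfies the corollary's assumption. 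Consequently $M_{k\wedge N}(c)$ converges a.s.\ to $M_N(c)\mathbf{1}_A$, and $L^1$-convergence gives $\E[M_N(c)\mathbf{1}_A]=\lim_k\E[M_{k\wedge N}(c)]=1$. Combining this with the strict bound $M_N(c)<K/\alpha$ on $A$ yields $1<(K/\alpha)\,\P(A)$, i.e., $\P(A)>\alpha/K$.

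The final inequality $\alpha/K>\alpha(1-\mu)$ follows from the algebraic rewriting $K=1+c\mu/(1-\mu)<1+\mu/(1-\mu)=1/(1-\mu)$, where the strict inequality uses $c<1$. The main obstacle is the identity $\E[M_N(c)\mathbf{1}_A]=1$: without the nondegeneracy assumption $\P(T=\mu)<1$ (which, via Corollary \ref{C2}, is what forces $M_k(c)\to 0$ on non-rejection) one could not discard the contribution of the limit on $A^c$, and the lower bound would collapse. The boundedness of the stopped martingale then supplies the uniform integrability needed to interchange limit and expectation.
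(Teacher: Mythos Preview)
Your proof is correct and follows essentially the same route as the paper: stop the martingale at $N$, use Corollary \ref{C2} to see that the limit of the stopped process is $0$ on $\{N=\infty\}$ and lies in $[1/\alpha,K/\alpha)$ on $\{N<\infty\}$, and then pass to the limit in $\E[M_{k\wedge N}(c)]=1$ via bounded (dominated) convergence to obtain both inequalities. The only cosmetic difference is that the paper reads off the upper bound $\P(A)\le\alpha$ from the same identity $1=\E[M^\alpha_\infty]\ge\alpha^{-1}\P(A)$, whereas you invoke Lemma \ref{maximal lemma} separately.
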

In particular, if $\E(T)=\mu$ and $\mu$ is small,
the null hypothesis $H_0:\E(T)\ge\mu$ wil be rejected with
probability close to  (but not more than) $\alpha$.
On the other hand, if $\mu$ is close to 1, the probability of Type I error may be
considerably smaller than $\alpha$.

\begin{proof}
When applied to a random sample of $T$, the process $\{M_k\}_k=\{M_k(c)\}_k$ is a martingale.
Let $N=\inf\{n\mid\log(M_n\ge\log(1/\alpha)\}$, so that $N=\infty$ if $M_n<1/\alpha$ for all $n$.
Then $\{M^\alpha_k\}_k=\{M_{k\wedge N}\}_k$ is a stopped martingale, and therefore a martingale.
Let $M^\alpha_\infty=\lim_{k\to\infty}M^\alpha_k$, 
then (see Cor. \ref{C2})  $M^\alpha_\infty$ takes values in the 
set $\{0\}\cup[\alpha^{-1},\alpha^{-1}(1-c+c/(1-\mu)))$. 
Because of dominated convergence we have
$1=\E(M_0)=\lim_{k\to\infty}\E(M^\alpha_k)=\E(M^\alpha_\infty)$
and 
$\alpha^{-1}\P\{M^\alpha_\infty\ge\alpha^{-1}\}\le\E(M^\alpha_\infty)<
\alpha^{-1}(1-c+c/(1-\mu))\P\{M^\alpha_\infty\ge\alpha^{-1}\}$.
\end{proof}

\subsubsection*{Average sample number}
We consider an example of the performance of tests as discussed in this section.
We took $\mu=0.05$, significance level $\alpha=0.05$ 
and considered the necessary sample number for rejection of $H_0:\E(T)\ge\mu$ for different 
$T$-distributions with $0\le T\le1$, $\E(T)=0.02$ and 
test supermartingales $\{M_k(c)\}_k$ with $0\le c\le 1$.
The results are compiled in Table \ref{T1}.
The last line starting with $[0.6,1]$ is based on the integrated test martingale 
$\{M_k(\pi)\}_k$ 
using as test measure the uniform probability density $\pi=2.5$ on the interval $[0.6,1]$, 
possibly based on a strong conviction
that $\E(T)\le\nu=0.02$, implying that the optimal $c$ is not less than $c_0=(\mu-\nu)/\mu=0.6$.
The average sample number and standard deviations in each instance are based on 1000 test runs.
The results for fixed $c$ are in close agreement with the approximations
of $\E(N_\alpha)$ and $\Var(N_\alpha)$ following from (\ref{renewal theorem}).
This is mainly due to the small excess
of $\log(M_n(c))$ over $\log(1/\alpha)$ at decision time $n=N_\alpha$ of at most $\log(1/(1-\mu))\approx0.05$
with respect to $\log(1/\alpha)\approx3$ (see Theorem \ref{size}).
\\
\noindent
\begin{table}[h]
\begin{tabular}{lcc}
$T:$&\multicolumn{2}{c}{$\Alt(0.02)$}\\
\phantom{0}$c$&\multicolumn{1}{c}{mean}&sd\\
0.2 & 516.2 & 127.4 \\
0.4 & 294.1 & 124.4 \\
0.6 & 245.9 & 169.2 \\
0.8 & 357.7 & 510.5 \\
1 & $\infty$ & -- \\
$[0.6,1]$ & 287.6&253.2
\end{tabular}
\begin{tabular}{cc}
\multicolumn{2}{c}{$5T\sim\Alt(0.10)$}\\
\multicolumn{1}{c}{mean}&sd\phantom{0}\\
482.7 & 45.0 \\
245.5 & 32.7 \\
166.7 & 27.7 \\
127.4 & 24.9 \\
104.0 & 23.4 \\
124.6 & 25.4
\end{tabular}
\begin{tabular}{rr}
\multicolumn{2}{c}{$\hbox{Beta}(0.02,0.98)$}\\
\multicolumn{1}{c}{mean}&sd\phantom{0}\\
495.3 & 81.7 \\
261.3 & 67.8 \\
186.9 & 68.0 \\
156.1 & 79.8 \\
166.2 & 155.0 \\
162.4&85.5
\end{tabular}
\begin{tabular}{rr}
\multicolumn{2}{c}{$\hbox{Beta}(2,98)$}\\
\multicolumn{1}{c}{mean}&sd\phantom{0}\\
476.6 & 10.1 \\
239.6 & 7.1 \\
160.5 & 5.9 \\
121.0 & 5.1 \\
97.2 & 4.5 \\
117.7&5.3
\end{tabular}
\begin{tabular}{r}
$0.02$\\
\multicolumn{1}{c}{=}\\
476 \\
239 \\
160 \\
121 \\
97 \\
117
\end{tabular}
\caption{Average sample numbers (mean) and their standard deviations (sd), 
in each case based on 1000 tests of $H_0:\E(T)\le0.05$ with significance level $\alpha=0.05$.}
\label{T1}
\end{table}
\\
We see confirmed that the optimal $c$ depending on the distribution of $T$ is some number greater than
$(\mu-\nu)/\mu=0.6$ (see Theorem \ref{ThmDKL}).
Recall that it is required that $c\le1$ 
in order to have a test supermartingale.
Notice that, except for the Bernoulli distribution, the integrated test supermartingale
integrated over the interval $[(\mu-\nu)/\mu,1]$ outperforms
the test supermartingale corresponding to the fixed test parameter $c=(\mu-\nu)/\mu$.
If one has no idea about $\E(T)$ other than $\E(T)<\mu$, then the integrated test supermartingale, 
integrated uniformly over $[0,1]$,
is a suitable choice.
\\
One may compare the results in Table \ref{T1} with a common practice in financial auditing
as specified in 
Appendix C-1 of the Audit Guide Audit Sampling \cite{AICPA}. 
In it, the $T$-values are referred to as \emph{taints}, and $\E(T)$ corresponds to the \emph{misstatement as a fraction of the population}.
This Appendix contains Table C-1: \emph{Monetary Unit Sample Size Determination Tables}, said to be  \emph{based on the Poisson distribution}.
With \emph{risk of incorrect acceptance} $\alpha$, \emph{tolerabele misstatement as a fraction}, $\mu$, and 
\emph{expected misstatement as a fraction}, $\nu$, as above, Table C-1
gives the optimal integer solution $n$ of the inequality $n\nu\ge Q(1-\alpha,1+n\mu)$
where $Q(p,a)$ denotes the $p$-quantile of the Gamma distribution with shape parameter $a$ and unit scale parameter.
For $\mu,\alpha,\nu$ as chosen in the simulations we find  $n=162$.
If the total sum of $T$-values in the sample does not exceed $n\nu=3.24$ the auditor
may conclude that the population is not misstated by a fraction more than $\mu$.
Notice that, whatever properties this testing procedure has, if in fact $\E(T)=\nu$, the critical level $n\nu$ is the expected value of the total sum of $T$-values in the sample 
and therefore,  for small $\alpha$ leading to large $n$,
the auditor will be successful 
with probability close to 0.50. 
If unfortunately, the total sum of $T$-values in the sample exceeds 3.24, the evidence that the misstatement as a fraction is more than $\mu$ is still rather weak and
further evidence shall be collected in \emph{providing support for the conclusions on which to base one's opinion}. 
Such open ends require a thorough preparation of an audit
if one wants to assign definite properties to the audit procedure.
~\\
\\
Testing $H_0:\E(T)\ge\mu$ in the conviction that $\E(T)\le\theta$
reminds of the context of acceptance sampling, where power 1 tests are not customary.
We will end this section by elaborating the following idea.

\begin{remark}\label{inverse tm}
Suppose $0\le T\le 1$ a.s. and $0<c<1$. 
Let $\{M_k(c)\}_k$ be test supermartingale for $H_0:\E(X)\ge\mu$ as constructed before
with factors $((1-c)+c\,(1-T_k)/(1-\mu))$.
Then $\{(M_k(c))^{-1}\}_{k=0}^\infty$ is a test supermartingale for null hypothesis 
$H_1:\E(T)\le\theta$ for 
$\theta=(1-c)\mu$ (or $c=(\mu-\theta)/\mu$).
\\
If $\E(T)\le\theta$ we have 
$\lim_{k\to\infty}M_k(c)=\infty$ a.s..
\\
If $\E(T)\ge\mu$ and $\P\{T=\mu\}<1$ then we have 
$\lim_{k\to\infty}(M_k(c))^{-1}=\infty$ a.s..
\end{remark}
\begin{proof}[Proof of Remark \ref{inverse tm}]
Suppose $\E(T)=\nu$.
Let 
$g(t)=(1-c+c\,(1-t)/(1-\mu))^{-1}$. 
In particular $g(t)$ is convex in $t$,
so that $g(t)\le (1-t)g(0)+tg(1)$ and 
the maximal value of $\E(g(T))$ is attained at 
the  Bernoulli rv $T_0$
with mean value $\nu$.
One easily checks that $\E(g(T_0))\le1$ if $\nu\le(1-c)\mu=\theta$.
It follows that $\{M_k(c)^{-1}\}_k$ is a test supermartingale for $H_0:\E(T)\le\theta$.
\\
If $\E(T)\ge\mu$ and $\P\{T=\mu\}<1$ then 
it follows from Corollary \ref{C2} applied to $X=1-T$ that 
$\lambda(c)=\E(\log((1-c)+c(1-T)/(1-\mu)))<0$ so that $\lim_{k\to\infty}M_k(c)=0$ a.s..
On the other hand, if $\E(T)\le\theta$, by Remark \ref{bddcase}
we have $\lambda(c)=\E(\log((1-c)+c(1-T)/(1-\mu))>0$ so that by Corollary \ref{C2} we have 
$\lim_{k\to\infty}M_k(c)=\infty$.
\\
Notice that if $\theta<\E(T)<\mu$ it is possible that $\lambda(c)=0$ and then the
expected decision time is infinite.
\end{proof}
\\
\\
Consider 
a variable $T$ such that
$0\le T\le1$ a.s..  
Consider some tolerance level $\mu$ (`Lot Tolerance Percent Defective') which should not be exceeded, 
where the supplier is able to provide quality level $\theta<\mu$ (`Acceptable Quality Level') at which he wishes
that the lot will be accepted.
The inspector and the supplier agree on the following test procedure specification.
If the hypothesis $H_0:\E(T)\ge\mu$ can be rejected at significance level $\alpha$,
the inspector will accept the lot.
If the hypothesis $H_1:\E(T)\le\theta$ can be rejected at significance level $\beta$,
the inspector will reject the lot.
Rejection should go together with some provision to protect the
average quality level of accepted lots (`Average Outgoing Quality').  
Consider the test supermartingale for $H_0:\E(T)\ge\mu$
$$
M_n=\prod_{i=1}^n((1-\frac{\mu-\theta}\mu)+\frac{\mu-\theta}\mu\times\frac{1-T_i}{1-\mu})
$$
Let decision time $n$ be the first time at which $M_n\ge A$ or $M_n\le B$.
Then reject $H_0$ if $M_n\ge A$ and reject $H_1$ if $M_n\le B$.
According to Remark \ref{inverse tm}, 
$M_n^{-1}$ is a test supermartingale for $H_1$.
For the procedure we can safely take $A=\alpha^{-1}$ and $B=\beta$,
but there is an opportunity for improvement (cf. \cite{W}).
\\
Let $\alpha^*$ (depending on $T$) be the probability to reject $H_0$ if $\E(T)=\mu$.
Let us ignore the overshoot over $A$ at decision time $n$.
This is justified if 
$(\mu-\theta)/(1-\mu)$ is small.
Since $\{M_n\}_n$ is martingale, we have the (approximate) equality
$1\approx\alpha^* A+(1-\alpha^*)B$, so that $\alpha^*\approx(1-B)/(A-B)$.
If $\E(T)\le\theta$, then $\{M_n^{-1}\}_n$ is a supermartingale so that
we get 
$1\ge(1-\beta^*)A^{-1}+\beta^* B^{-1}$, that is $\beta^*\le(1-A^{-1})/(B^{-1}-A^{-1})$.
In particular, still ignoring overshoots at decision time, $A=(1-\beta)/\alpha$ and $B=\beta/(1-\alpha)$
satisfy the specification.

\section{Confidence regions}\label{conf reg}

As one will have noticed we did not include a provision in our tests to avoid infinite sample size,
as could easily happen e.g.~in case $H_0$ is satisfied.
In practice it may be a more important issue to find a suitable confidence lower bound or
a confidence interval.
We will investigate ways to reuse the sample without loss of confidence 
if at any time one decides to switch from testing 
$H_0:\E(T)\le\mu_0$ to determining a confidence lower bound.
One may consult Waudby-Smith and Ramdas, \cite{W-SR}, for a thorough treatment of the construction of confidence
intervals using test supermartingales, especially from the perspective of large sample sizes.

\subsubsection*{Confidence lower bounds}
Choose a confidence level $(1-\alpha)$ with $0<\alpha<1$, for example $\alpha=0.05$.
We will construct an adapted process $\{B^l_k\}_k$ of $(1-\alpha)$-confidence lower bounds
such that even $\P\{\forall k:\E(T)> B^l_k\}\ge1-\alpha$.
\\
\\
Suppose for each $\mu>0$ we have maintained a process $\{M^\mu_k\}_k$, such that
$\{M^\mu_k\}_k$ is test supermartingale for the hypothesis $H_0:\E(T)\le \mu$ for a given integrable rv $T$ such that $T\ge0$ a.s..
Suppose moreover that any realization of $\{M^\mu_k\}_\mu$ 
is continuous and decreasing in $\mu$
for all $k$. 
%
We will call such a family 
$\{M^\mu_k\}_{k,\mu}$ a \emph{decreasing} family of test supermartingales for
the family of null hypotheses $\{H_0:\E(T)\le\mu\}_\mu$.
When used as a tool in the determination of confidence regions an additional useful property
is that $M^\mu_k$ is convex in $\mu$.
We will refer to such a family as a \emph{convex decreasing} family of test supermartingales.
We give some examples:
\begin{remark}\label{suitable families}
Let $T_1,T_2,\ldots$ be an iid sample of $T$.
Let $\{M^\mu_k\}_{k,\mu}$ be the family of test supermartingales 
where 
\begin{equation}\label{Mkmu}
M_k^\mu=\prod_{i=1}^k((1-c_i)+c_i\,T_i/\mu)
\end{equation} 
as in (\ref{test martingale}),
where for each $i$, $c_i$ depends on $T_1,\ldots,T_i$ but not on $\mu$.
Then the family $\{M^\mu_k\}_{k,\mu}$ of test supermartingales is a convex decreasing family.
\\
In particular, if $\pi$ is a probability density on $[0,1]$
(determined independently of the sample) 
the family of integrated test supermartingales
$\{M_k^\mu(\pi)\}_k$ is a convex decreasing family, where 
\begin{equation}\label{Mkmupi}
M_k^\mu(\pi)=\int_0^1M_k^\mu(c)\pi(c)dc 
\hbox{ with }M_k^\mu(c)=\prod_{i=1}^k((1-c)+c\,T_i/\mu)
\end{equation}
Families of this form will be called \emph{integrated decreasing family.}
\\
If $0\le T\le1$ a.s., the following (non convex) decreasing families of test supermartingales 
for $0<\Delta<1$ may also be useful in finding a confidence lower bound
greater than $\E(T)-\Delta$, cf. Theorem \ref{ThmDKL}.
\begin{equation}
M_k^\mu(\Delta)=\prod_{i=1}^k\left((1-\frac{\Delta}{1-\mu})+\frac{\Delta}{1-\mu}\,\frac{T_i}\mu\right)
,\hbox{ for }0\le\mu\le1-\Delta.
\end{equation}
\end{remark}
Suppose $\{M^\mu_k\}_{k,\mu}$ is a decreasing family.
Let $B^l_k$ be the statistic 
$$B^l_k=-\infty \hbox{ if } M^\mu_k<1/\alpha\hbox{ for all }\mu>0,\hbox{ and }
B^l_k=\sup\{\mu\mid M^\mu_k\ge1/\alpha\}\hbox{ otherwise. }
$$
If $B^l_k=\mu^-\ge0$, then it follows from continuity that $M_k^{\mu^-}\ge1/\alpha$.
%
%
Notice that for $\mu\ge0$ we have the equivalence $\mu\le B_k^l \Leftrightarrow M^\mu_k\ge1/\alpha$.
In particular, if $\E(T)=\mu$,
then\\  $\P\{\exists k:\mu\le B^l_k\}=\P\{\exists{k}:M^\mu_k\ge1/\alpha\}\le\alpha$,
so that $\P\{\forall k:\mu> B^l_k\}\ge1-\alpha$.

\begin{theorem}
Suppose $\{M^\mu_k\}_{k,\mu}$ is a decreasing family of test supermartingales for the null hypotheses
$H_0:\E(T)\le\mu$, where $T\ge0$ a.s..
\\
Then the statistics $B^l_k$ satisfy
$\P\{\forall k:\E(T)>B^l_k\}\ge1-\alpha$.
\end{theorem}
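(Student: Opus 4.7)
The plan is to reduce the statement to a single application of the maximal lemma for the test supermartingale at the specific value $\mu_* = \E(T)$. The key observation, already noted just before the theorem, is the equivalence $\mu \le B^l_k \Leftrightarrow M^\mu_k \ge 1/\alpha$, which is built into the definition of $B^l_k$ as a supremum together with the assumed continuity of $\mu \mapsto M^\mu_k$. I would begin the proof by writing this equivalence out once more, since it is the bridge between the event controlling $\E(T)$ relative to $B^l_k$ and the event controlling the test supermartingale relative to the level $1/\alpha$.

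Next I would fix $\mu_* := \E(T)$. Because the null hypothesis $H_0:\E(T)\le \mu_*$ is trivially satisfied, the hypothesis of a decreasing family guarantees that $\{M^{\mu_*}_k\}_k$ is a positive supermartingale with $\E[M^{\mu_*}_0]=1$ (the normalization $M^\mu_0=1$ is in place for the constructions in Remark \ref{suitable families}). Lemma \ref{maximal lemma} applied at level $\lambda = 1/\alpha$ then yields
\begin{equation*}
\P\{\exists k : M^{\mu_*}_k \ge 1/\alpha\} \le \alpha\,\E[M^{\mu_*}_0] = \alpha.
\end{equation*}

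Combining this with the equivalence from the first paragraph, instantiated at $\mu = \mu_* = \E(T)$, gives
\begin{equation*}
\P\{\exists k : \E(T) \le B^l_k\} = \P\{\exists k : M^{\E(T)}_k \ge 1/\alpha\} \le \alpha,
\end{equation*}
and taking complements produces the desired $\P\{\forall k : \E(T) > B^l_k\} \ge 1 - \alpha$.

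I do not expect a serious obstacle here; the only subtlety is ensuring the equivalence $\mu \le B^l_k \Leftrightarrow M^\mu_k \ge 1/\alpha$ really is an equivalence for all $\mu \ge 0$ (including the boundary case), which is where the assumed continuity of $M^\mu_k$ in $\mu$ (so that the supremum is attained when finite) and the monotonicity in $\mu$ (so that $M^\mu_k \ge 1/\alpha$ holds for every $\mu$ below the supremum) are both used. Once that point is recorded, the rest is a one-line application of Lemma \ref{maximal lemma}.
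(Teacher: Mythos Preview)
Your proposal is correct and matches the paper's own argument essentially line for line: the paper records the equivalence $\mu\le B^l_k \Leftrightarrow M^\mu_k\ge1/\alpha$ immediately before the theorem and then instantiates it at $\mu=\E(T)$, applying Lemma \ref{maximal lemma} to get $\P\{\exists k:M^\mu_k\ge1/\alpha\}\le\alpha$ and passing to complements. There is no additional idea in the paper's treatment beyond what you have written.
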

Thus, at time $k$, $\max\{B^l_j\mid j\le k\}$ is a $(1-\alpha)$-confidence lower bound.
Moreover, if its value at time $k$ is not convenient, one may continue sampling
in the hope to find
a better lower bound without losing confidence.
\\
Also, if initially the investigation of $T$ started off by testing $H_0:\E(T)\le\mu_0$ for 
a fixed $\mu_0$ at significance level $\alpha$ using some test supermartingale $\{M_k\}_k$ 
(e.g. as in (\ref{test martingale}) or (\ref{its}))
then at any time $k$ 
one may change one's mind 
and \emph{reuse the sample} to construct a $(1-\alpha)$-confidence lower bound by means of a decreasing family $\{M_k^\mu\}_{k,\mu}$ provided $M_i^{\mu_0}=M_i$ for $i\le k$
(equality as random variables).
The reason is that under these conditions rejection of $H_0$ at significance level $\alpha$ 
is equivalent to finding a $(1-\alpha)$-confidence lower bound that is at most
$\mu_0$.
%
%
\\
\\
\noindent
We present the following example using $\{M_k^\mu(1)\}_{k,\mu}$ 
as defined in (\ref{Mkmupi}):
\begin{remark}
Suppose $T\sim\Alt(\nu)$, let $T_1,T_2,\ldots$ be an iid sample of $T$
and define $S_k=T_1+\cdots+T_k$.
\\
Consider the test supermartingales $\{M_k^\mu(1)\}_{k,\mu}$
constructed with the factors $f(T_i)=T_i/\mu$.
Then $M_k^\mu=(1/\mu)^{k}$ if $S_k=k$, $M_k^\mu=0$ if $S_k<k$.
The corresponding $(1-\alpha)$-confidence lower bound is 
$B^l_k=\alpha^{1/k}$ if $S_k=k$ and otherwise $M_k^\mu=0$ for all $\mu$ 
so that
$B^l_k=-\infty$.
Let $K$ be the stopping time defined by $K= k$ if $T_i=1$ for $i<k$ and $T_k=0$.
Then $\mu^l=\max_k B^l_k=B^l_{K-1}=\alpha^{1/(K-1)}$ 
(or $-\infty$ if $K=1$) is a $(1-\alpha)$-confidence lower bound.
\end{remark}
%
%
%
%
Consider a convex decreasing family $\{M^\mu_k\}_{k,\mu}$ as defined by equation 
(\ref{Mkmu}) of Remark \ref{suitable families}.
Suppose for given $k$ one has observed $T_1=t_1,\ldots,T_k=t_k$.
Since $\log$ is a concave function, Jensen's inequality implies that for
$m=\frac1k\sum_{i=1}^k((1-c_i)+c_i\,t_i/\mu)$ we have $\frac1k\log(M^\mu_k)\le \log(m)$.
If there is $i\le k$ such that $c_it_i>0$ we may solve equality $m=1$ for $\mu$ and get
$$
M^\mu_k \le 1 \hbox{ if }\mu=\frac{\sum_{i=1}^k c_i t_i}{\sum_{i=1}^k c_i}.
$$ 
It follows that for integrated decreasing families
$\{M_k^\mu\}_{k,\mu}$ as defined by equation (\ref{Mkmupi}) in Remark \ref{suitable families}
we have $M^\mu_k(c)\le 1$  and $M^\mu_k(\pi)\le 1$ for $\mu=\overline t_k=\frac1k\sum_{i=1}^kt_i$.

\begin{remark}\label{Rem7}
Let $\{M^\mu_k(c)\}_{k,\mu}$ be decreasing family of the form (\ref{Mkmu}) with test parameters $c_{k-1}=c$ for all $k$.
If for some $c$ we have $M^\mu_k(c)>1$, then 
$\mu<\overline t_k$.
\\
It follows that for an integrated decreasing family $\{M_k^\mu(\pi)\}_{k,\mu}$ 
of the form (\ref{Mkmupi}) also $M_k^\mu(\pi)>1$
implies that $\mu<\overline t_k$.
\\
In particular for decreasing families $\{M_k^\mu(c)\}_{k,\mu}$ and $\{M_k^\mu(\pi)\}_{k,\mu}$
the confidence lower bounds satisfy $B^l_k<\overline t_k$ for all $k$.
\end{remark}
If one decides to stop sampling at time $k$ depending on the combination of 
$\overline t_k$ and $B_k^l$, one should realize that
it is quite possible that the sample mean $\overline t_k$ as an estimator of $\E(T)$
is \emph{biased}.
Of course, if the stopping time $k$ does not depend on the sampling history, 
the sample mean is unbiased estimator of $\E(T)$.
Anyway, the behavior of $(\overline T_k-\E(T))$ for large sample sizes is described by the Law of the Iterated Logarithm. 
\\
\\
For rv's $T$ which are bounded from above, say $T\le1$ a.s. 
$(1-\alpha)$-\emph{confidence upper bounds}
can be constructed as follows:
Given a decreasing family $\{M_k^\mu\}_{k,\mu}$, 
for example modelled after (\ref{Mkmu}) or (\ref{Mkmupi}),
one may consider $\hat M_k^\mu=M_k^{1-\mu}$ applied to a sample $1-T_1,1-T_2,\ldots$ 
of $1-T$.
Then $\{\hat M_k^\mu\}_k$ is a test supermartingale for $H_0:\E(1-T)\le1-\mu$,
equivalent to $H_0:\E(T)\ge\mu$. 
Furthermore $\hat M_k^\mu$ is increasing in $\mu$ for all $k$.
We may refer to the family $\{\hat M_k^\mu\}_{k,\mu}$ as a \emph{increasing} family.
It is convex if $\mu\mapsto \hat M_k^\mu$ is a convex function for all $k$ and all
realizations of the sample of $T$.
Let $B_k^l$ be an $(1-\alpha)$-confidence lower bound for $\E(1-T)$, 
then $B_k^u=1-B_k^l$ is a $(1-\alpha)$-confidence upper bound for $\E(T)$.
It satisfies
$$B_k^u=\infty \hbox{ if } \hat M^\mu_k<1/\alpha\hbox{ for all }\mu>0,\hbox{ and }
B_k^u=\inf\{\mu\mid \hat M^\mu_k\ge1/\alpha\}\hbox{ otherwise. }
$$


\subsubsection*{Confidence intervals}
It is natural to associate confidence \emph{intervals} (rather than regions) for 
$\E(T)$ with a family of
tests of the hypotheses $H_0:\E(T)=\mu$, where $\mu\in\R$.
A desirable property  of the family of tests then is that 
if $H_0:\E(T)=\mu$ cannot be rejected for $\mu=\mu_1$ and $\mu=\mu_2$, 
it cannot be rejected for all $\mu$ between $\mu_1$ and $\mu_2$.
We will produce a family of test supermartingales 
designed to produce confidence intervals for $\E(T)$.
As we will restrict to bounded rv's $T$, we will assume that $0\le T\le1$ a.s..

\begin{theorem}
Suppose $T_1,T_2,...$ is an iid sample of $T$
and let $\{M_k^{\mu,-}\}_{k,\mu}$ and $\{M_k^{\mu,+}\}_{k,\mu}$
be a convex decreasing resp. convex increasing family of test supermartingales
for the null hypotheses $\{H_0:\E(T)\le\mu\}_\mu$, resp. $\{H_0:\E(T)\ge\mu\}_\mu$.
Let $p^-,p^+\ge0$, $p^-+p^+=1$, and $M_k^\mu=p^-M_k^{\mu,-}+p^+M_k^{\mu,+}$.
\\
For all $0<\mu<1$ the process $\{M_k^\mu\}_k$ is a test supermartigale for $H_0:\E(T)=\mu$.
Any realization of 
$\mu\mapsto M_k^\mu$ is a convex function.
Let $0<\alpha<1$.
The region
$R=\{\mu\mid \forall k:M_k^\mu<1/\alpha\}$ is a $(1-\alpha)$-confidence interval.
Thus, for any $k$, $R_k=\{\mu\mid \forall n\le k:M_n^\mu<1/\alpha\}$
as well as $R'_k=\{\mu\mid M_k^\mu<1/\alpha\}$
are $(1-\alpha)$-confidence intervals.
When for the decreasing and increasing family, families of the form 
(\ref{Mkmupi})
with test measures $\pi^-$, resp. $\pi^+$ are used, 
confidence interval $R'_k$ contains the $k$-th sample average $\overline t_k$.
\end{theorem}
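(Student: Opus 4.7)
The plan is to handle the four assertions in the order they are stated, using the convexity and supermartingale hypotheses on the two constituent families together with the maximal lemma.

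First I would verify that $\{M_k^\mu\}_k$ is a test supermartingale for $H_0:\E(T)=\mu$. When $\E(T)=\mu$, both one-sided null hypotheses $\E(T)\le\mu$ and $\E(T)\ge\mu$ are satisfied, so $\{M_k^{\mu,-}\}_k$ and $\{M_k^{\mu,+}\}_k$ are simultaneously positive supermartingales with $M_0^{\mu,\pm}=1$. Since $p^-,p^+\ge0$ with $p^-+p^+=1$, the convex combination $\{M_k^\mu\}_k$ is a positive supermartingale with $M_0^\mu=p^-+p^+=1>0$. Convexity of any realization of $\mu\mapsto M_k^\mu$ is inherited by nonnegative linear combination from the assumed convexity of $\mu\mapsto M_k^{\mu,\pm}$.

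Next I would derive the coverage and interval properties. Let $\mu^\star=\E(T)$. Applying Lemma \ref{maximal lemma} to the positive supermartingale $\{M_k^{\mu^\star}\}_k$ with $\E[M_0^{\mu^\star}]=1$ gives
$$\P\{\exists k:M_k^{\mu^\star}\ge 1/\alpha\}\le\alpha,$$
so $\P\{\mu^\star\in R\}\ge 1-\alpha$. Since $R\subseteq R_k\subseteq R'_k$ for every $k$, the same lower bound holds for $R_k$ and $R'_k$. To see that $R$, $R_k$, $R'_k$ are intervals, I would note that $R'_k=\{\mu\mid M_k^\mu<1/\alpha\}$ is a strict sublevel set of a convex function, hence convex; both $R_k=\bigcap_{n\le k}R'_n$ and $R=\bigcap_k R'_k$ are intersections of intervals.

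Finally, for the claim $\overline t_k\in R'_k$ when the constituent families are of the integrated form (\ref{Mkmupi}), I would apply Remark \ref{Rem7} twice. Applied directly to the decreasing family $\{M_k^{\mu,-}\}$, it yields $M_k^{\overline t_k,-}\le 1$. The increasing family $\{M_k^{\mu,+}\}$ arises as a family of the form (\ref{Mkmupi}) with parameter $1-\mu$ applied to the sample $1-T_1,1-T_2,\ldots$; this transformed sample has mean $1-\overline t_k$, so Remark \ref{Rem7} applied to that decreasing family gives $M_k^{\mu,+}\le 1$ at $1-\mu=1-\overline t_k$, i.e.\ at $\mu=\overline t_k$. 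Hence
$$M_k^{\overline t_k}=p^-M_k^{\overline t_k,-}+p^+M_k^{\overline t_k,+}\le p^-+p^+=1<1/\alpha,$$
so $\overline t_k\in R'_k$.

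The main obstacle I anticipate is the last step: Remark \ref{Rem7} is phrased only for decreasing families, so one must be careful to re-parameterise $\mu\leftrightarrow 1-\mu$ in tandem with the sample transformation $T\leftrightarrow 1-T$, verifying that the bound evaluated at the transformed sample mean $1-\overline t_k$ translates back to the original sample mean $\overline t_k$. The remaining claims are essentially bookkeeping once the supermartingale and convexity properties have been observed.
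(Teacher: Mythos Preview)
Your proposal is correct and follows essentially the same route as the paper's proof: establish the supermartingale property of the convex combination at $\mu=\E(T)$, derive coverage of $R$ via the maximal lemma, pass to $R_k$ and $R'_k$ by inclusion, obtain the interval shape from convexity of $\mu\mapsto M_k^\mu$ as a sublevel set, and use Remark~\ref{Rem7} on both families to place $\overline t_k$ in $R'_k$. Your explicit handling of the re-parameterisation $\mu\leftrightarrow1-\mu$, $T\leftrightarrow1-T$ for the increasing family is more careful than the paper, which simply invokes Remark~\ref{Rem7} for both families without spelling this out.
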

\begin{proof}
First of all, suppose $\E(T)=\nu$.
Then $\{M^\nu_n(\pi)\}_n$ is a supermartingale, so 
$\P\{\nu\in R\}=\P\{\forall n:M^\nu_n(\pi)<1/\alpha\}\ge1-\alpha$.
This means that $R$ is a $(1-\alpha)$-confidence region.
Since a convex combination of convex functions is convex, $\mu\mapsto M_n^\mu$ is a convex function.
Thus for each $k$ the set 
$\{\mu\mid M_k^\mu<1/\alpha\}$ is an interval, as well as their intersection
$\{\mu\mid \forall k:M_k^\mu<1/\alpha\}$.
According to Remark \ref{Rem7}, for $\mu=\overline t_k$ we have $M_n^{\mu,+}\le1$
and $M_n^{\mu,-}\le1$ if $\{M_k^{\mu,-}\}_{k,\mu}$ and $\{M_k^{\mu,+}\}_{k,\mu}$ 
are of the form (\ref{Mkmupi}) with test measures $\pi^-$ resp $\pi^+$
so that also $M^\mu_n\le1$.
\end{proof}
\\
\\
Notice that if one set off the investigation of $T$ by looking for a suitable lower bound 
but at some time discovered that also a suitable upper bound is necessary,
it is not safe to switch to the above method and reuse the data!
On the other hand, if one is interested in a $(1-\alpha^-)$-confidence lower bound
using some decreasing family of test supermartingales for the family of null hypotheses
$H_0:\E(T)\le\mu$,
one could as well have maintained at the same time an increasing family of test supermartingales for the null hypotheses
$H_0:\E(T)\ge\mu$, for constructing
$(1-\alpha^+)$-confidence upper bounds. 
We may construct $(1-\alpha^--\alpha^+)$-confidence intervals as follows:
\begin{theorem}
Suppose $T_1,T_2,...$ is an iid sample of $T$
and let $\{M_k^{\mu,-}\}_{k,\mu}$ and $\{M_k^{\mu,+}\}_{k,\mu}$
be a decreasing resp. increasing family of test supermartingales
for the family of null hypotheses $\{H_0:\E(T)\le\mu\}_\mu$, resp. $\{H_0:\E(T)\ge\mu\}_\mu$.
Let $\alpha^+>0$ and $\alpha^->0$.
Let $B_k^l$ be the $(1-\alpha^-)$-confidence lower bounds based on 
$\{M_k^{\mu,-}\}_{k,\mu}$, and $B_k^u$ be the $(1-\alpha^+)$-confidence upper bounds based on 
$\{M_k^{\mu,+}\}_{k,\mu}$.
Then the intervals $(\max_{i\le k}B_i^l,\min_{i\le k}B_i^u)$ are $(1-\alpha^+-\alpha^-)$-confidence intervals.
If one used the decreasing and increasing familiy of the form (\ref{Mkmupi}) it holds that 
$B_k^l< \overline t_k< B_k^u$ for all $k$.
\end{theorem}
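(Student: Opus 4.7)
The plan is to combine the two one-sided confidence sequences via a union bound. First I would invoke the preceding theorem on decreasing families, applied to $\{M_k^{\mu,-}\}_{k,\mu}$, to obtain that the event $A^- = \{\forall k: \E(T) > B_k^l\}$ satisfies $\P(A^-) \ge 1 - \alpha^-$. By the analogous statement for the increasing family $\{M_k^{\mu,+}\}_{k,\mu}$ (equivalently, by applying the decreasing-family theorem to the reflected sample $1-T_1, 1-T_2, \ldots$, testing $\E(1-T) \le 1-\mu$, and translating back), the event $A^+ = \{\forall k: \E(T) < B_k^u\}$ satisfies $\P(A^+) \ge 1 - \alpha^+$.

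Next I would take complements and apply the union bound: $\P((A^-)^c \cup (A^+)^c) \le \alpha^- + \alpha^+$, hence $\P(A^- \cap A^+) \ge 1 - \alpha^- - \alpha^+$. On the intersection $A^- \cap A^+$, one has $B_i^l < \E(T) < B_i^u$ for every index $i$, and consequently $\max_{i \le k} B_i^l < \E(T) < \min_{i \le k} B_i^u$ for every $k$ simultaneously. Since this single event has probability at least $1 - \alpha^- - \alpha^+$, each interval $(\max_{i\le k} B_i^l, \min_{i\le k} B_i^u)$ is a $(1-\alpha^+-\alpha^-)$-confidence interval, in fact uniformly over $k$.

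For the final claim $B_k^l < \overline{t}_k < B_k^u$ under the integrated form (\ref{Mkmupi}), I would appeal directly to Remark \ref{Rem7}, which already supplies the strict inequality $B_k^l < \overline{t}_k$ for the decreasing family. The symmetric inequality $\overline{t}_k < B_k^u$ follows by applying Remark \ref{Rem7} to the reflected sample $(1-T_i)_i$: the associated lower confidence bound for $\E(1-T)$ is strictly below the sample average $1-\overline{t}_k$, and translating through $B_k^u = 1 - B_k^{l,\text{reflected}}$ yields $B_k^u > \overline{t}_k$.

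No step requires serious work; the main thing to keep straight is that the time-uniform guarantees $\P(A^\pm) \ge 1-\alpha^\pm$ survive intact under intersection up to the additive union-bound penalty, so that taking running extrema $\max_{i\le k}$ and $\min_{i\le k}$ costs nothing beyond what is already paid. The mildest subtlety is in phrasing the upper-bound construction as a reflection of the lower-bound construction, so that Remark \ref{Rem7} applies verbatim on the reflected side.
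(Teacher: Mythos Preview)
Your argument is correct and is exactly the intended one: combine the two time-uniform one-sided guarantees from the preceding theorem on decreasing families (and its reflected version for upper bounds) via a union bound, then invoke Remark~\ref{Rem7} on each side for the inequality $B_k^l<\overline t_k<B_k^u$. The paper in fact states this theorem without proof, treating it as an immediate consequence of the earlier one-sided construction, so your write-up simply makes explicit what the paper leaves to the reader.
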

It is an unpleasant feature of the above procedures
that it may happen (of course with probability at most $\alpha$, resp. $\alpha^-+\alpha^+$)
that the constructed confidence interval ends up empty.
If one would like to avoid weird conclusions, one could stick to one of the
confidence intervals $R_k$ that is not empty or to $R'_k$ for some $k$.


~\\
\\
Institute for Mathematics, Astrophysics and Particle Physics (IMAPP),
\\
Faculty of Science,
Radboud University Nijmegen,
\\
Heyendaalseweg 135, 6525 AJ Nijmegen, The Netherlands
\\
E-mail: \texttt{H.Hendriks@math.ru.nl}
\end{document}

# Integrated Martingales

cs<-(1:10-0.5)/10
nu<-0.05
T<-runif(1000)<nu
cub<-numeric(length(T))
for(i in 1:length(T))
{
try(cub[i]<-uniroot(function(mu)mean(apply(1-cs
}
plot(cub,ylim=c(0,1)) # plot(cub,ylim=c(0,1),type='l')
abline(h=nu)

# Integrated Martingale, Ts=(T1,...,T{k-1}), corresponding c for factor (1-c*(Tk-mu)/(1-mu))
IMc<-function(Ts,mu){
mean(cs*apply(1-cs
}